\setlist[itemize]{leftmargin=*}
\setlist[enumerate]{leftmargin=*}
\setlist{nolistsep}
\newcommand{\one}{\ding{182}}
\newcommand{\two}{\ding{183}}
\newcommand{\three}{\ding{184}}
\newcommand{\four}{\ding{185}}
\newcommand{\macsection}[1]{\noindent\textbf{#1}~~~~}
\newtheorem{theorem}{Theorem}[section]
\newtheorem{lemma}[theorem]{Lemma}
\newtheorem{assumption}{Assumption}
\DeclareMathOperator{\E}{\mathbb{E}}
\def\BibTeX{{\rm B\kern-.05em{\sc i\kern-.025em b}\kern-.08em
    T\kern-.1667em\lower.7ex\hbox{E}\kern-.125emX}}
\begin{document}

%
\title{Breaking (Global) Barriers in Parallel Stochastic Optimization with Wait-Avoiding Group Averaging}

\author{Shigang~Li,~\IEEEmembership{Member~IEEE,} 
        Tal~Ben-Nun,
        Giorgi~Nadiradze, 
        Salvatore~Di~Girolamo,
        Nikoli~Dryden,
        Dan~Alistarh,
        and~Torsten~Hoefler,~\IEEEmembership{Senior~Member~IEEE}
\IEEEcompsocitemizethanks{
\IEEEcompsocthanksitem Shigang Li is with Department of Computer Science, ETH Zurich. E-mail: shigangli.cs@gmail.com
\IEEEcompsocthanksitem Tal Ben-Nun is with Department of Computer Science, ETH Zurich. E-mail: talbn@inf.ethz.ch
\IEEEcompsocthanksitem Giorgi Nadiradze is with IST Austria. E-mail: giorgi.nadiradze@ist.ac.at
\IEEEcompsocthanksitem Salvatore Di Girolamo is with Department of Computer Science, ETH Zurich. E-mail: salvatore.digirolamo@inf.ethz.ch
\IEEEcompsocthanksitem Nikoli Dryden is with Department of Computer Science, ETH Zurich. E-mail: ndryden@ethz.ch
\IEEEcompsocthanksitem Dan Alistarh is with IST Austria. E-mail: dan.alistarh@ist.ac.at
\IEEEcompsocthanksitem Torsten Hoefler is with Department of Computer Science, ETH Zurich. E-mail: htor@inf.ethz.ch}
\thanks{(Corresponding Author: Shigang Li.) \\ Published in IEEE Transactions on Parallel and Distributed Systems (TPDS), Vol. 32, No. 7, July 2021, DOI: \textcolor{Blue}{\url{https://doi.org/10.1109/TPDS.2020.3040606}}}
}

\IEEEtitleabstractindextext{%
\begin{abstract}
Deep learning at scale is dominated by communication time. Distributing samples across nodes usually yields the best performance, but poses scaling challenges due to global information dissemination and load imbalance across uneven sample lengths. State-of-the-art decentralized optimizers mitigate the problem, but require more iterations to achieve the same accuracy as their globally-communicating counterparts. We present Wait-Avoiding Group Model Averaging (WAGMA) SGD, a wait-avoiding stochastic optimizer that reduces global communication via subgroup weight exchange. The key insight is a combination of algorithmic changes to the averaging scheme and the use of a group allreduce operation. We prove the convergence of WAGMA-SGD, and empirically show that it retains convergence rates similar to Allreduce-SGD. For evaluation, we train ResNet-50 on ImageNet; Transformer for machine translation; and deep reinforcement learning for navigation at scale. Compared with state-of-the-art decentralized SGD variants, WAGMA-SGD significantly improves training throughput (e.g., 2.1$\times$ on 1,024 GPUs for reinforcement learning), and achieves the fastest time-to-solution (e.g., the highest score using the shortest training time for Transformer).
\end{abstract}

\begin{IEEEkeywords}
stochastic gradient descent, distributed deep learning, decentralized optimization.
\end{IEEEkeywords}}

\maketitle

\IEEEdisplaynontitleabstractindextext

%
\IEEEpeerreviewmaketitle

\section{Introduction}

The introduction of deep learning is one of the most important advancements in science over the past two decades, powering industries from autonomous driving~\cite{bojarski2016end} to drug discovery~\cite{alphafold}. With the rise of deep neural networks, their training evolved into a computationally-intensive task that consumes as many resources as modern complex high-performance computing problems~\cite{openai}. As a result, an abundance of research has been conducted into its scaling and distribution~\cite{dl-survey}.

The leading contenders for largest workloads in deep learning are Neural Language Models~\cite{kaplan2020scaling,gpt-2}, Deep Reinforcement Learning (RL)~\cite{alphastar,mcc2018empirical} and Neural Architecture Search~\cite{liu2018darts}. In these regimes, computation time is measured in thousands of ``GPU days'', with some utilizing hundreds of accelerators (GPUs, TPUs) for several weeks~\cite{alphastar,sparcml,regevo}. 

Distributed training is largely supported by data parallelism, where sample evaluation is partitioned across processors. In this parallelism mode, all participants must exchange their gradients or model, resulting in an \texttt{Allreduce} operation across a cluster~\cite{mpi-3.1}. The exchange communication dominates the overall runtime~\cite{sparcml}, especially in large-minibatch stochastic gradient descent (SGD). To exacerbate the problem, certain datasets and environments are inherently imbalanced, e.g., with different sentence/video lengths~\cite{li2020taming} or heterogeneous environments in RL~\cite{wijmans2019decentralized}. 

In order to mitigate the wait time for gradient/weight exchange, existing approaches attempt to relax model consistency between processors~\cite{dl-survey,tang2020}. Examples include synchronous gossip-based SGD~\cite{syncring,assran2018stochastic}, asynchronous SGD~\cite{hogwild,mnih2016asynchronous,asyncring,gossipgrad}, and asynchronous SGD with bounded staleness~\cite{ssp13,zhang2015deep,chen2016scalable,stich2018local}. Gossip-based SGD replaces the global allreduce by communicating with randomly selected neighbors. Asynchronous SGD breaks the global synchronization to mitigate
the effect of stragglers (slow processes). However, most of these approaches adversely impact convergence, necessitating an increase in the number of iterations~\cite{assran2018stochastic, nadiradze2019popsgd}, sometimes to the point where synchronous waits are preferable.

In this paper, we solve this problem by introducing Wait-Avoiding Group Model Averaging (WAGMA) SGD, a novel optimizer that combines group collective communication with bounded staleness, in order to ensure competitive performance with decentralized and asynchronous methods, while retaining the convergence rate of synchronous model-averaging SGD. 
WAGMA-SGD locally communicates model updates across subgroups of processors, mitigating the need for global communication at every training iteration. 
Specifically, we propose to use a group allreduce operation for model averaging, in which the fastest process will trigger exchanges within all subgroups. Grouping is performed dynamically to facilitate
model update propagation, and as a result not only speeds up communication, but also mitigates the effect of unbalanced workloads, all without harming convergence in practice. 

We theoretically prove the convergence of WAGMA-SGD, showing that, for certain parameter values, its convergence rate is comparable to synchronous SGD with model averaging. 
Subsequently, we test the algorithm on a supercomputer equipped with GPUs for three different categories of deep learning: supervised image classification on the ImageNet dataset; semi-supervised language modeling on the WMT17 translation dataset; and deep reinforcement learning on the Habitat indoor navigation dataset.
We show that both theoretically and empirically, WAGMA-SGD is favorable over other asynchronous algorithms and the baselines, which makes it an excellent approach for scaling up distributed deep learning.

Our main contributions are:
\begin{itemize}
  \item We propose a novel asynchronous decentralized optimizer --- WAGMA-SGD, and realize it based on a wait-avoiding group allreduce operation.
  \item We theoretically analyze the convergence of WAGMA-SGD, showing that, under reasonable parameter values, it converges at the same rate as SGD, with linear speedup due to parallel updates. 
  \item Compared with state-of-the-art decentralized SGD, WAGMA-SGD significantly improves the training throughput (e.g., 2.1$\times$ on 1,024 GPUs on RL), and achieves the fastest time-to-solution for all three evaluated tasks (e.g., the highest score using the shortest training time for Transformer).
\end{itemize}

\section{Background and Related Work}
\label{sec:bg}

Deep neural networks are primarily trained with mini-batch SGD~\cite{bottou2018optimization}. Let $b$ be the batch size, $W_t$ the neural network weights at step $t$, $(x_i, y_i)$ a set of samples of size $b$, and $\ell$ a loss function. We compute the loss for each sample as $z_i = \ell(W_t, x_i, y_i)$ and then a stochastic gradient as
\[ G_t = \frac{1}{b} \sum_{i = 0}^b \nabla \ell(W_t, z_i). \]
SGD then iterates over steps such that $W_{t+1} = W_t - \eta_t G_t$. In more general terms, first-order stochastic gradient update rules can take different forms (e.g., by adding a momentum term), which is represented as $W_{t+1}=W_t + U\left(G_t,W_{(0,\dots,t)},t\right)$.  In distributed environments with $P$ processors, $b$ denotes the \textit{local} batch size per processor. We refer to Ben-Nun \& Hoefler~\cite{dl-survey} for a general overview of distributed deep learning.

Thanks to the robustness of stochastic optimization, in distributed environments one can relax weight updates by varying several axes, trading off communication overhead for convergence. Data-parallel distributed SGD algorithms can be broadly identified by the following five questions:

\macsection{Q1. What is averaged?}

There are two typical approaches for aggregating distributed updates: gradient and model averaging. When performing gradient averaging, we compute $G_t$ as the average over the global batch size. With standard model averaging, the SGD update is applied locally at the node, and then the resulting model $W_{t+1}$ is averaged over all processors.

Complementary to these approaches is the degree of quantization or sparsity in the exchanged updates. As these concepts are out of the scope of this paper, we refer to Tang et al.~\cite{tang2020} for a comprehensive survey.

\macsection{Q2. Who is coordinating the averaging?}

Earlier implementations of distributed SGD for deep learning~\cite{dean12} use a \textit{centralized} coordination architecture, where a parameter server or other coordinator maintains a master copy of the model that workers use. As this approach does not scale to large numbers of processors, a \textit{decentralized} global clock can be synchronized across workers, where each worker maintains a local replica of the model and communicates updates to other workers directly.

To mitigate the overheads of global communication and synchronization, several decentralized instances of SGD have been proposed, e.g.,~\cite{syncring, asyncring, assran2018stochastic, nadiradze2019popsgd}, where each worker maintains a local model but communicates updates in separate schedules, rather than synchronizing globally.

\macsection{Q3. How old (stale) can averaged components be?}

In a \emph{synchronous} system, model or gradient averaging occurs when all processes are on the same training iteration $t$. This does not guarantee that every worker uses the same parameters (i.e., consistent model), however, standard parameter server or globally-coordinated methods ensure all workers have a consistent model. In an \emph{asynchronous} system, averaging can occur between workers at any point. We thus define the staleness of models/gradients by $\tau$, indicating how many iterations have passed since the produced value's model was updated. A \emph{bounded staleness} system mitigates convergence issues with asynchronous systems by ensuring that the difference in the number of training iterations between the slowest and fastest processor is bounded, using $\tau$ as a proxy.

\macsection{Q4. How often is global averaging performed?}

While bounded and unbounded staleness SGD variants do not adhere to rigid communication schedules, some algorithms may periodically synchronize all processors' model replicas. This ensures that not only the staleness is bounded by $\tau$, but also the consistency of the model is retained throughout training, mitigating its divergence across processors. In other algorithms, this global consensus is achieved post-training, by choosing the model average or the model with best generalization scores. Note that under this nomenclature, the global average frequency of synchronous variants is one step.

\begin{table*}[ht!]
  \caption{Classification of data-parallel SGD variants.}
  \label{tab:rel-sgd}
  \centering
  \renewcommand{\arraystretch}{1.2}
  \setlength{\tabcolsep}{5pt}
  \begin{tabular}{llp{7cm}p{6.3cm}}
    \toprule
    Coordination & Staleness & Gradient Averaging &  Model Averaging  \\
    \midrule
    \multirow{4}{*}{Centralized} &
    None &
 Parameter server~\cite{scaling14}, P3~\cite{jayarajan2019priority} &
 --- \\
    & Unbounded &
Hogwild!~\cite{hogwild}, Downpour SGD~\cite{dean12}, AASGD~\cite{grishchenko2018asynchronous} & SAPS-PSGD~\cite{tang2020communication}
\\
    & Bounded &
SSP~\cite{ssp13}, Rudra~\cite{rudra}, Softsync SGD~\cite{zhang2015staleness}, Gaia~\cite{gaia}, $k$-async SGD~\cite{dutta2018slow}, Qsparse-local-SGD~\cite{basu2019qsparse}, Hybrid sync/async~\cite{kurth2017deep} & EASGD~\cite{zhang2015deep}, Federated learning~\cite{mcmahan2017communication,konevcny2016federated}
\\
    \midrule
    \multirow{3}{*}{\shortstack[l]{Decentralized,\\ $S=P$}} &
    None &
    \textbf{Allreduce-SGD}~\cite{gibiansky2017bringing,sergeev2018horovod,deep500} & BMUF~\cite{chen2016scalable}
    \\
    & Unbounded & ---
    & One-shot SGD~\cite{mcdonald2009efficient}, \ \ \ \ \ WP-SGD~\cite{cheng2020wp}, \ \ \ \ \ \ \ \ \ \ \ \ \ \ \ \ \ \ SimuParallelSGD~\cite{zinkevich2010parallelized}
    \\
    & Bounded & \textbf{Eager-SGD}~\cite{li2020taming}, Gradient lag~\cite{kurth2018exascale}
    & ---
    \\

 \midrule
\multirow{3}{*}{\shortstack[l]{Decentralized,\\ $S=\sqrt{P}$}} &
None & ---  & ---\\
& Unbounded & ---
& ---
\\
& Bounded & --- & $\bigstar$\textbf{WAGMA-SGD}$\bigstar$
\\

    \midrule
    \multirow{3}{*}{\shortstack[l]{Decentralized,\\ $S=\mathcal{O}(1)$}} &
    None &
--- & \textbf{D-PSGD}~\cite{syncring}, \textbf{SGP}~\cite{assran2018stochastic}
\\
    & Unbounded & GossipGraD~\cite{gossipgrad}, Choco-SGD~\cite{koloskova2019decentralized}
& \textbf{AD-PSGD}~\cite{asyncring}, Gossiping SGD~\cite{jin16},  SwarmSGD~\cite{nadiradze2019popsgd}
\\
    & Bounded & CDSGD~\cite{jiang2017collaborative}
 & \textbf{Local SGD}~\cite{stich2018local,lin2018don,wang2019adaptive}
\\
    \bottomrule
  \end{tabular}
\end{table*}

\macsection{Q5. How many learners are averaging at every step?}

In the steps between the aforementioned global model averaging period, decentralized SGD variants perform local averages with a certain group (or \textit{quorum}) size $S$,
leveraging the fact that several averaging steps can be performed in parallel. 
Removing the global communication bottleneck in decentralized SGD has been shown to enable scaling to tens and even hundreds of nodes~\cite{syncring, asyncring, assran2018stochastic}. 
However, performing averaging in pairs does come at the cost of \emph{worse convergence}: 
in particular, early proposals on decentralized algorithms~\cite{syncring, asyncring} lose accuracy with respect to the synchronous baseline at scale, 
while more recent work~\cite{assran2018stochastic, nadiradze2019popsgd} observe that the algorithms can achieve full accuracy if executed for \emph{more iterations} than the synchronous baseline: in particular, they execute between twice and four times more SGD iterations in total, relative to the synchronous baseline, erasing much of the speedup due to increased scalability. 
This decreased convergence behavior is connected to the analytical bounds provided by these algorithms: while the theoretical convergence rates suggest linear speedup with respect to the number of SGD steps and executing nodes, 
these rates only apply after a very large number of SGD steps have been taken, in order to allow the pairwise averaging process to ``mix'' well, thereby simulating all-to-all averaging. See Section~\ref{convergeproof} for a detailed discussion.

\subsection{Training at Scale}

An orthogonal challenge to distributed stochastic optimization is that of unbalanced workloads. Imbalance may be caused by the training system~\cite{iosup2011performance,li2020taming,asyncring} or by the task itself~\cite{li2020taming,wijmans2019decentralized}. Training on multi-tenant cloud systems can suffer from performance variability due to resource sharing. Several deep learning tasks, such as video classification and machine translation, have inherent load imbalance, because input/output sequences have different lengths~\cite{li2020taming}. In deep reinforcement learning, an agent must interact with the environment to generate training data. For RL tasks using heterogeneous environments~\cite{wijmans2019decentralized}, the runtime of training data generation varies significantly. Quantitative profiling for the load imbalance of language modeling and RL is presented in Section~\ref{eval:transformer} and Section~\ref{eval:drl}, respectively.

\subsection{Comparison Targets}
\label{relatedwork}

In Table~\ref{tab:rel-sgd} we summarize and classify the distributed SGD algorithms most relevant to our work. Algorithms in \textbf{bold} are used for comparison in this work. Since decentralized algorithms typically scale and perform better on large-scale systems than centralized algorithms, we limit our comparison to decentralized algorithms. 
The algorithms with which we compare our approach are chosen specifically to be spread across the different answers to the above five questions, prioritizing popular algorithms with proven convergence, both in theory and in practice:

\begin{itemize}
  \item Allreduce-SGD is the standard data-parallel training.
  \item Local SGD~\cite{stich2018local,lin2018don,wang2019adaptive} performs a fixed number of local iterations of SGD (a hyperparameter determined by the user) and then averages the models over all processes with a standard allreduce. Several variants with different methods for determining the frequency of global averaging exist.
  \item Decentralized parallel SGD (D-PSGD)~\cite{syncring} uses a ring topology, where each process averages its local model with its two neighbors. Processes advance synchronously with a single global clock.
  \item Stochastic gradient push (SGP)~\cite{assran2018stochastic} generalizes the topology used in D-PSGD to support more flexible, asymmetric communication patterns.
  \item Eager-SGD~\cite{li2020taming} uses partial collective allreduces over the gradients, allowing at most half processors to contribute stale gradients if not ready.
  \item For Asynchronous decentralized parallel SGD (AD-PSGD)~\cite{asyncring}, each worker runs a communication thread and a computation thread in parallel. The computation thread calculates the model updates while the communication thread exchanges models with a selected communication neighbor. AD-PSGD extends the idea of D-PSGD by allowing workers to exchange models at any time.  
\end{itemize}

These cover nearly all varieties of consistency and averaging, and practical differences in communication patterns. We do not include a comparison to a decentralized, asynchronous, gradient averaging algorithm, as GossipGraD does not show good convergence for many applications and Choco-SGD has not yet been applied to deep learning.

\subsection{Discussion}

Following the discussion on the impact of quorum size on convergence (\textbf{Q5}), it is natural to ask whether performing decentralized averaging \emph{in larger groups} would be able to provide the best of both worlds, enabling the full convergence of the synchronous algorithm, and the scalability of fully decentralized ones. 
There are two main barriers to this solution: the first one is at the implementation level, since, to our knowledge, no efficient non-blocking implementation of group model averaging exists. 
The second is at the application level, since it is not clear whether group averaging would be able to achieve the same convergence as the synchronous solution (both in theory and in practice). 
In the following sections, we address both of these issues.

\section{Wait-Avoiding Group Communication}

The allreduce operation~\cite{mpi-3.1} is defined as a reduction whose results are shared among all participants. 
Although several
optimizations~\cite{gibiansky2017bringing,li2013numa,rabenseifner2004optimization,li2014improved}
have been designed to improve the performance of this collective, allreduce
poses an implicit global synchronization point, which makes the operation vulnerable
to stragglers during deep learning training. 
On larger systems, the performance of the compute nodes can be impacted by
different internal (e.g., load imbalance) and external factors (e.g., OS or network~\cite{de2019mitigating} noise),
potentially increasing the synchronization overhead.
We define this collective as \textit{synchronous allreduce}. While
non-blocking collectives~\cite{hoefler2007implementation} can alleviate the
synchronization overhead, they do not fully remove it and completion still waits.
Even if the participating processes are perfectly synchronized, the optimal
scaling of an allreduce of size $N$ is at best $\mathcal{O}\left(\log P +
N\right)$ for $P$ processes~\cite{allred-bounds-yuan,sparcml}. Therefore,
growing process counts will reduce the parallel efficiency and eventually make
the reduction a scaling bottleneck.

\subsection{Wait-Avoiding Group Allreduce}
\label{waitfree}

To overcome the synchronization overhead and
overall collective cost, we introduce a new class of \emph{wait-avoiding group
collectives}, focusing on \emph{group allreduce} for the purpose of this work.  We relax
synchronization by making the collectives 
externally-triggerable~\cite{di2015exploiting, li2020taming}, namely, a collective
can be initiated without requiring that all processes enter it, by
externally activating the communication schedule of \emph{late} processes with
activation messages sent by the \emph{early} ones. 
Once activated, a
group allreduce does not perform a global reduction. Instead, it
partially reduces the data within non-overlapping groups of processes, limiting
the number of communications needed to implement the group collective.

\subsubsection{Collective activation}
In a wait-avoiding group allreduce, any process can make progress regardless of
what the other processes are working on. This wait-avoidance is achieved by
the activation component. We call the process reaching the collective call
first the \textit{activator}. The activator is in charge of informing the
other processes that an allreduce operation has started and that they have to participate, regardless of whether they reached the collective call-site.

\begin{figure}[ht!]
  \centerline{\includegraphics[width=.9\linewidth]{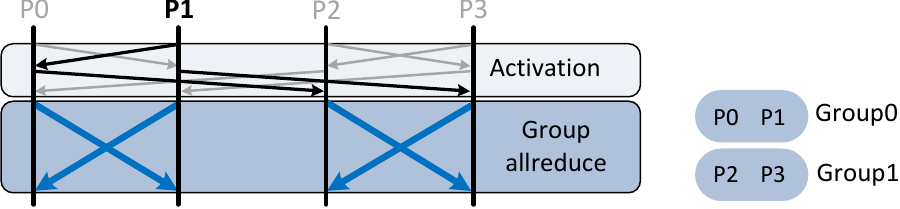}}
  \caption{Wait-avoiding group allreduce on four processes with a group size of two. P1 arrived first and activates the operation.}
  \label{waitfreegroup}
\end{figure}

In a wait-avoiding group allreduce, any process can initiate the collective.  We use a modified version
of the recursive doubling algorithm that builds a butterfly topology, which can
be seen as a set of overlapping binomial trees, one rooted at each process.
Any node can activate the collective by sending activation messages along
the binomial tree rooted at itself. 
Fig.~\ref{waitfreegroup} shows an example where P1 is the activator. In this
case, P1 uses its broadcast tree and sends the activation messages to
P0 and P3. Once activated, P0 first forwards the activation message to P2,
after which P0 starts executing its group allreduce schedule. 

It is possible that several processes arrive at the wait-avoiding group allreduce
operation at close proximity, which means we may have more than one activator during the activation phase. To
guarantee that a process does not execute the same collective twice, we assign
each operation a version number that is increased every time the collective
is executed. The collective version number is encoded in the tag of the
activation messages: once an activation is received, the collective schedule is
activated only if its version number is lower than or equal to the one carried by
the activation message. The version number check is executed also when a
process reaches the collective: if it fails, then the version of the
collective that the process wants to activate has already been executed (and the
process has passively participated in it). In this case, no activation messages
are sent.

\subsubsection{Asynchronous execution}
To enable asynchronous execution of the custom collectives, we extend the \texttt{fflib} communication
library~\cite{di2015exploiting}, adding support for wait-avoiding group allreduce.
\texttt{fflib} allows programmers to customize collective operations via a flexible, DAG-based representation of point-to-point and local compute operations, defined as \emph{schedules}. The library provides a C-based interface for schedule creation and nonblocking invocation, using MPI as its primary backend, with additional support for network offloading engines such as sPIN~\cite{spin}. Our defined schedule for group operations models both the activation and group allreduce phases.

\subsection{Dynamic Grouping Strategy}
\label{groupingstrategy}

As discussed in Section \ref{sec:bg}, in data-parallel SGD variants such as allreduce SGD~\cite{sergeev2018horovod,deep500} and gossip SGD~\cite{syncring,assran2018stochastic,asyncring}, each process keeps propagating local model updates to the other processes at every iteration to make global progress. We propose a dynamic grouping strategy to reduce the latency (in steps) of local update propagation. Together with 
the group allreduce operation, the grouping strategy guarantees that the local updates can be globally propagated 
within $\log P$ iterations. The larger the group size, the faster the updates are propagated. By carefully selecting the group size, we can achieve both lower latency than gossip SGD and efficient communication by reducing contention.

We define the dynamic grouping strategy in Algorithm~\ref{alg:grouping}.
We assume the number of processes $P$ is a power-of-two, which is a common case in current distributed training systems. The group size $S$ ($\le P$) is also set to a power-of-two. 
In line 2, we initialize the \textit{mask}, and calculate the number of phases in a butterfly topology for $P$ and $S$ processes, respectively.
Line 3 initializes the \textit{shift}. In each training iteration $t$, the algorithm first initializes $P$ groups, each of which contains one process (line 4). In line 8, an equivalence relation between each pair of processes is found using the bitwise XOR operation. For a pair of processes with an equivalence relation (i.e., $p\equiv q$), we find the groups $p$ and $q$ belong to, respectively (line 9); if $p$ and $q$ are not
in the same group, we merge the two groups into one using the union operation (lines 10--12). In line 15, the processes will have been partitioned into $P/S$ groups in iteration $t$. Note that the initial value of \textit{shift}
is periodically changing in every iteration (line 3), which, in turn, changes the group composition in every iteration. 

\begin{algorithm}[h!]
  \footnotesize
  \begin{algorithmic}[1]
    \State \textbf{Input:} Total $P$ processes. $S$ is the group size. $t$ is the training iteration.
    \State \textit{mask} = 1,\ \  \textit{global\_phases} = $log_{2}P$,\ \  \textit{group\_phases} = $log_{2}S$
    \State \textit{shift} = $(t*$\textit{group\_phases}$)$$\mod$\textit{global\_phases}
    \State \textbf{\textit{make\ each\ process\ an\ individual\ group}} \Comment{initialize $P$ groups}
    \For{$r = 1$ \textbf{to} \textit{group\_phases}}
    \State \textit{mask} $<<=$ \textit{shift} \Comment{bitwise left shift on \textit{mask}}
    \For{$p = 0$ \textbf{to} $P-1$}
    \State $q = p$ XOR \textit{mask} \Comment{equivalence relation $p\equiv q$}
    \State \textbf{Find groups}: $p\in$ \textit{group\_p}, $q \in$ \textit{group\_q}
    \If{\textit{group\_p} $\not=$ \textit{group\_q}}
    \State \textbf{Merge groups}: \textit{group\_merge} $=$ \textit{group\_p}$\ \cup\ $\textit{group\_q}
    \EndIf
    \EndFor
    \State \textit{shift} $= ($\textit{shift}$+1)\mod$\textit{global\_phases}
    \EndFor \Comment{processes are partitioned into $P/S$ groups in iteration $t$}    
  \end{algorithmic}
  \caption{Dynamic grouping strategy}
  \label{alg:grouping}
\end{algorithm}

To demonstrate dynamic grouping, we use $P=8$ and $S=4$ as an example. In iteration $0$,
all processes are initially partitioned into 8 groups. The set of equivalence relations
includes $0\equiv1$, $2\equiv3$, $4\equiv5$, $6\equiv7$, $0\equiv2$, $1\equiv3$, $4\equiv6$, and 
$5\equiv7$. By recursively merging the two groups in which a pair of processes with a equivalence relation belongs to, we obtain two non-overlapping groups, which contain the processor sets \{0, 1, 2, 3\} and \{4, 5, 6, 7\}. 
In iteration $1$, the set of equivalence relations changes; thus, the grouping changes accordingly (i.e., \{0, 1, 4, 5\} and \{2, 3, 6, 7\}).

Note that we only use Algorithm~\ref{alg:grouping} to formally describe the grouping strategy.
The grouping strategy together with allreduce within each group is implemented concisely following the phases
of the butterfly topology, namely each pair of processes with a equivalence relation in a phase would exchange messages. We use the variable $t$ to change the phases that should be executed in the current iteration.
Fig.~\ref{iterative} presents the iterative execution of group allreduce with
dynamic grouping in WAGMA-SGD, and grouping is shown on the right side. We can see that although the group size is fixed, the groups are dynamically changing during the iterations. 
Within each group, the allreduce is conducted following $\log_2 S$ phases of the butterfly topology. To maintain convergence with this communication scheme in data-parallel deep learning training, a standard synchronous allreduce across all processes is conducted every $\tau$ iterations, bounding the staleness of the weights. 
In the following section, we will present the algorithm in detail and discuss this periodic synchronization further.

\begin{figure}[t]
\centerline{\includegraphics[width=.9\linewidth]{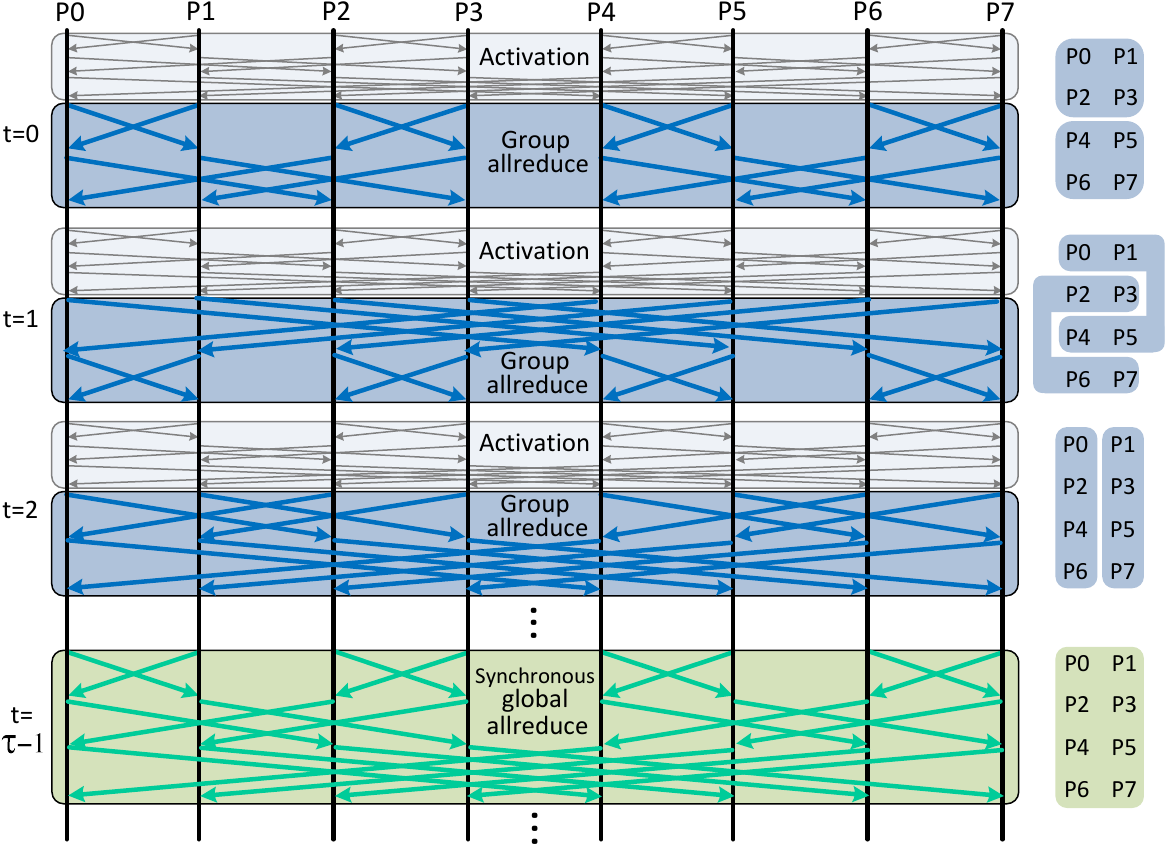}}
\caption{Communication scheme of WAGMA-SGD. Total 8 processes and the group size is 4. Every $\tau$ iterations, the algorithm synchronizes globally.}
\label{iterative}
\end{figure}

\section{Wait-Avoiding Group Model Averaging}

\begin{algorithm}[t]
  \footnotesize
  \begin{algorithmic}[1]
    \State \textbf{Input:} $b$ is local batchsize for $P$ processes. $S$ is the group size of the processes. 
    $\tau$ is the synchronization period. 
    \For{$t = 0$ \textbf{to} $T-1$}
    \State \textcolor{Orange}{$\vec{x},\vec{y}\leftarrow$ Each process samples $b$ elements from dataset}
    \State \textcolor{Orange}{$\vec{z} \leftarrow \ell\left(W_{t}, \vec{x}, \vec{y}\right)$}
    \State \textcolor{Orange}{$G_t^{local}\leftarrow\frac{1}{b} \Sigma_{i=0}^{b}\nabla\ell\left(W_{t}, \vec{z}_i\right)$}
    \State \textcolor{Orange}{$\Delta W_{t}\leftarrow U\left(G_t^{local},W_{(0,\dots,t)}, t\right)$}
    \State \textcolor{Orange}{$W_{t}^{\prime} \leftarrow W_{t}+\Delta W_{t}$}
    \If{$(t+1)\mod\tau \not= 0$}
    \State \textcolor{RoyalBlue}{$W_{t}^{sum}\leftarrow\ wait$-$avoiding\_group\_allreduce\left(W_{t}^{\prime}, t\right)$}
        \If{$W_{t}^{\prime}\ is\ not\ stale$}
        \State $W_{t+1}\leftarrow\frac{1}{S}\ W_{t}^{sum}$
        \Else
        \State $W_{t+1}\leftarrow\frac{1}{S+1}\ \left(W_{t}^{sum} + W_{t}^{\prime}\right)$
        \EndIf
    \Else
    \State \textcolor{JungleGreen}{$W_{t+1}\leftarrow\frac{1}{P}\ sync\_allreduce\left(W_{t}^{\prime}\right)$}
    \EndIf    
    \EndFor
  \end{algorithmic}
  \caption{WAGMA-SGD}
  \label{alg:WAGMAsgd}
\end{algorithm}

Based on the insight that larger groups converge faster, and on the novel implementation of wait-avoiding group collectives, we design the Wait-Avoiding Group Model Averaging (WAGMA) SGD algorithm. WAGMA-SGD can be classified as a \textit{model-averaging}, \textit{bounded-staleness}, \textit{decentralized} SGD with a \textit{group size} of $S\propto \sqrt{P}$ and a \textit{global communication period} of $\tau$ steps. As listed in Algorithm~\ref{alg:WAGMAsgd}, WAGMA-SGD is similar to minibatch SGD, but makes a few important distinctions. In lines 3--7, each process calculates the local gradients $G_t^{local}$ and then applies the local gradients to derive and apply the model update $\Delta W_t$. Subsequently, the wait-avoiding group model averaging
is conducted (lines 8--17) using the aforementioned wait-avoiding communication scheme. 
From an algorithmic perspective, WAGMA-SGD does not rely on certain choice of group members for the local collectives. However, instead of randomly choosing groups of processes, we use the 
butterfly strategy (Algorithm~\ref{alg:grouping}) for efficient and deterministic communication.

In each iteration, faster processes will trigger the model averaging immediately without waiting (line 9, $t$ is used to control grouping), which may incur averaging the local models with some stale models from the slower processes. To both bound the staleness and mitigate divergence across local model replicas, we define a synchronization period $\tau$, in which the models are averaged 
across all processes using a global allreduce (line 16). Empirically, we set the 
synchronization period $\tau$ to 10 training iterations, which balances model accuracy with training throughput, as we will show in Section~\ref{evaluate}.

From the algorithm perspective, there are four parameters in WAGMA-SGD, which can be used to approximately represent other parallel SGD variants by adjusting the parameters. The parameters include $\tau$, a boolean variable $\alpha$ to indicate whether to use the wait-avoiding group allreduce, a boolean variable $\beta$ to indicate whether to use a normal group allreduce (without activation component), and the group size $S$. Note that $\alpha$ and $\beta$ can not be \textit{true} at the same time. If $\alpha$ == $\beta$ == \textit{false}, WAGMA-SGD degrades to local SGD with global communication period $\tau$. If $\alpha$ == \textit{true}, $\beta$ == \textit{false}, S == O(1) and $\tau$ == $\infty$, WAGMA-SGD is similar to AD-PSGD but WAGMA-SGD requires to wait for the fastest process (i.e., the \textit{activator}). If $\alpha$ == \textit{false}, $\beta$ == \textit{true}, S == O(1) and $\tau$ == $\infty$, WAGMA-SGD is similar to D-PSGD and SGP.

An execution snapshot of WAGMA-SGD ($P=4$ and $S=2$) is presented in Fig.~\ref{snapshot}. 
Suppose P1 is a straggler. When the group allreduce in iteration $t$ is triggered by any of the other
three processes, P1 contributes the stale model parameters $W^{1^\prime}_{t-1}$ while the other three processes contribute up-to-date models. In iteration $t$, P1 and P0 are in
the same group; therefore, $W^{1^\prime}_{t-1}$ and $W^{0^\prime}_{t}$ will be added together to derive $W^{\text{sum}}_{t}$. P0 will
use the averaged model $W^{0}_{t+1}$ (line 11 in Algorithm~\ref{alg:grouping}) for the next iteration of training. P1 subsequently finishes the calculation for the local updated
model in iteration $t$ (i.e., $W^{1^\prime}_{t}$), but finds out that the group allreduce in iteration $t$ is already finished by checking the value of a preset tag variable in the receive buffer. In
this case, P1 will average the stale model $W^{1^\prime}_{t}$ with $W^{sum}_{t}$ (line 13 in Algorithm~\ref{alg:grouping}), and the averaged model $W^{1}_{t+1}$ will be used for the next iteration of training. Meanwhile, the data in the send buffer of P1 is updated by $W^{1^\prime}_{t}$. If the group allreduce in iteration $t+1$ is triggered by some faster process at this time, P1 will continue to passively contribute the stale model $W^{1^\prime}_{t}$. When a standard allreduce is called
at the synchronization point, all processes are forced to contribute the model parameters after training for the same number of iterations. In Fig.~\ref{snapshot}, P1 catches up with the other processes in iteration $t+1$; thus, P1 will contribute the timely model $W^{1^\prime}_{t+1}$ to P3, as they are in the same group.

\begin{figure}[t]
\centerline{\includegraphics[width=.9\linewidth]{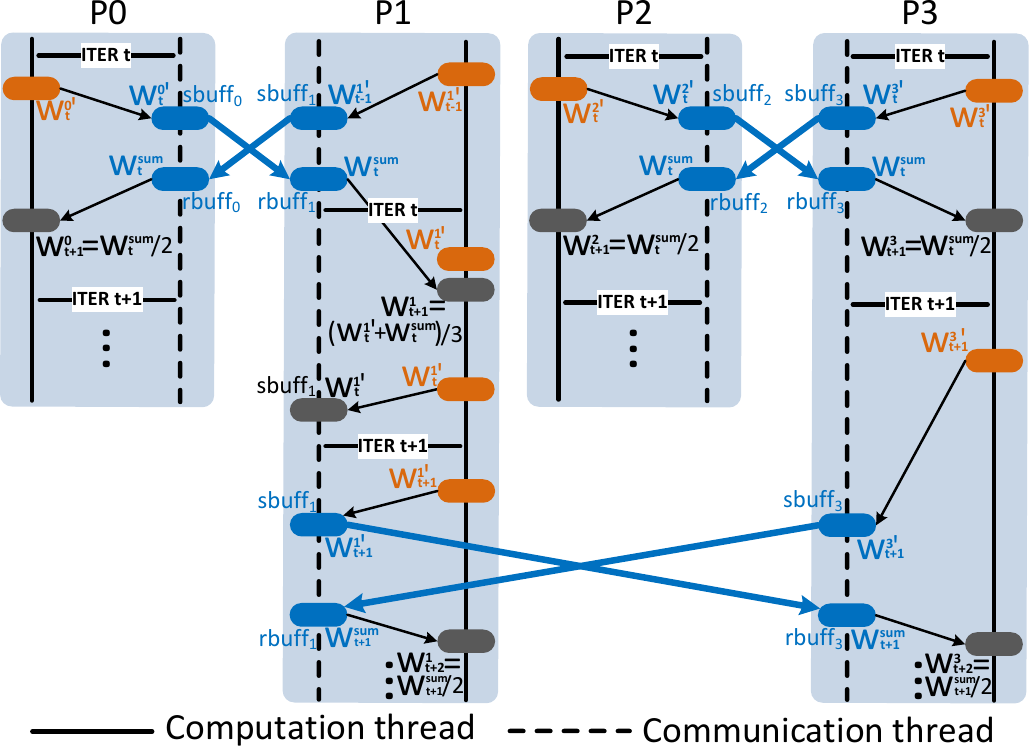}}
\caption{WAGMA-SGD execution snapshot for $P$=4 and $S$=2.}
\label{snapshot}
\end{figure}

\subsection{Proof of Convergence}
\label{convergeproof}

\subsubsection{Algorithm Modelling}
For analysis purposes, we will model the algorithm's execution as follows. 
We will proceed in steps, indexed by time $t \geq 0$. 
Each node $i$ maintains its own local model $W_t^i$, and has a local partition of the data. 
In each step, a group of nodes of size $S$ is chosen to interact. Each node takes a local gradient step, and then nodes average their models. 
This averaging step might be inconsistent, as per the above semantics.
In the analysis, we will assume that the group of $S$ interacting nodes is chosen uniformly at random---in the long run, the resulting interaction graph will have the same properties as the butterfly interaction strategy used in the implementation. 
While our analysis considers each interaction sequentially, in practice $\Theta( P / S ) $ interaction steps can occur in parallel. 

\subsubsection{Setup and Analytic Assumptions}
We will assume a standard setting in which we are given a dataset of $D$ samples $\mathcal{D} = \{e_1, e_2, \ldots, e_D\}$, and to each associate a differentiable loss function  $f_e: \mathbb{R}^d  \rightarrow \mathbb{R}$. Each node $i$ is given a random partition $\mathcal{D}_i$ of the dataset $\mathcal{E}$, and we wish to solve an empirical risk minimization problem by finding 
\[
  x^\star = \textnormal{argmin}_{x \in \mathbb{R}^d} \left[ F (x) := \frac{1}{D} \sum_{e \in \mathcal{D}} f_e(x) \right].
\]

Let $F_i = \frac{P}{D} \sum_{e \in \mathcal{D}_i} f_e $ be the loss function corresponding to the dataset of the $i$th node, and $F^\star = F(x^\star)$. 
To make the analysis tractable, we will make the following standard assumptions on the loss function.
\begin{assumption}
  \label{lem:assumption}
  We assume the following hold:
  \begin{enumerate}
    \item \emph{(Lipschitz gradients)} All functions $f_e$ have $L$-Lipschitz gradient, for some constant $L$. 
    \item \emph{(Bounded Second Moment)} There exists a constant $M$ such that  for any node $i$ and $w \in \mathbb{R}^d,$ $\mathbb{E}_{e \in D_i} \| \nabla f_e (w)  \|^2 \leq M^2$. 
    \item \emph{(Bounded Staleness)} The staleness during the averaging step is upper bounded by a parameter $\tau$. 
    That is, for any node $i$ participating in the averaging step at time $t$, averaging is performed with respect to model $W_{t'}^i$, where $t' \geq t - \tau + 1$, and every gradient update is applied globally at most $\tau$ steps after it was generated.   
    
  \end{enumerate}

\end{assumption}

\subsubsection{Convergence result}
We can now state the main convergence result.
For readability, we state a simplified variant that highlights the relationship between parameter values, in particular the relation between the convergence time $T$, $P$ processors, and the size of the interacting group $S$. The detailed proof will be presented in Section~\ref{detailedproof}.

\begin{theorem}
  \label{thm:convergence}
  Consider the setting described above, in which we wish to optimize a non-convex function $F : \mathbb{R}^d \rightarrow \mathbb{R}$. 
  Let $S$ be the size of a communication group, and assume that the maximum staleness $\tau$ is constant. 
  Fix a success parameter $\varepsilon > 0$. 
  For each time $t$, we define $\mu_t = \sum_{i = 1}^P W_t^i$ to be the average of local models at time $t$. 
  Then there exists a setting of the learning rate in the order of $P / \sqrt T$ such that, if the algorithm has taken
  
  $$T = \Omega \left( \max \left\{ \frac{(F(W_0) - F^\star)^2 }{\epsilon^2},   \frac{S^4 M^4 L^4}{\epsilon^2}, {P^4 \tau^4} \right\} \right) \textnormal{ steps, }$$ 
  
  \noindent then there exists an iterate $0 \leq T^\star \leq T$ such that
  
  $$ \mathbb{E} \| \nabla F( \mu_{T^\star} ) \|^2 \leq \varepsilon,$$
  
  \noindent where the expectation is taken w.r.t. the randomness in the sampling and interactions.

\end{theorem}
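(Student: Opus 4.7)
The plan is to prove convergence via a virtual-sequence analysis: track the globally averaged model $\mu_t$ over time using a standard non-convex SGD descent argument, treating the decentralized updates as noisy perturbations around this average. First I would apply the $L$-smoothness descent lemma to $F(\mu_{t+1})$ versus $F(\mu_t)$, exploiting the fact that (up to staleness and local-update inconsistency) $\mu_{t+1} - \mu_t$ equals $-\eta$ times an average of stochastic gradients. Using $\|\nabla F(\mu_t) - \nabla F(W^i)\| \le L\|\mu_t - W^i\|$ together with the bounded second moment assumption, this produces an inequality of the form
\begin{equation*}
\mathbb{E}[F(\mu_{t+1})] \leq F(\mu_t) - \frac{\eta}{2}\|\nabla F(\mu_t)\|^2 + \frac{\eta L^2}{P}\sum_{i=1}^P \mathbb{E}\|\mu_t - W_{t-s_i}^i\|^2 + O(\eta^2 L M^2),
\end{equation*}
where $s_i \leq \tau$ captures the staleness of the model whose gradient is actually applied at node $i$ during step $t$. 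The critical error term is the ``consensus distance'' between the virtual average $\mu_t$ and the (possibly stale) local models producing the gradients.

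Next I would bound this consensus distance. For the mixing component, I would exploit that one step of uniformly random group-of-$S$ averaging acts as a doubly stochastic mixing operator whose effective contraction per step is controlled by its spectral gap, which in our random-partition setting scales like $S/P$. Combined with the periodic global synchronization every $\tau$ steps, which forces $W_t^i = \mu_t$ at the endpoints of each window, this yields a window-local bound $\mathbb{E}\|\mu_t - W_t^i\|^2 = O(\eta^2 M^2 \tau^2 (P/S)^2)$. Bounded staleness is then folded in by unrolling a stale model $W_{t-s}^i$ as $W_t^i$ minus at most $s \leq \tau$ local gradient contributions, each of norm bounded by $\eta M$, and applying the triangle and Cauchy--Schwarz inequalities, which only inflates the previous bound by an additional factor polynomial in $\tau$.

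Substituting the consensus bound into the descent inequality, summing over $t = 0, \ldots, T-1$, and telescoping the $F(\mu_{t+1}) - F(\mu_t)$ differences against $F(\mu_0) - F^\star \leq F(W_0) - F^\star$, I obtain
\begin{equation*}
\frac{1}{T}\sum_{t=0}^{T-1}\mathbb{E}\|\nabla F(\mu_t)\|^2 \leq \frac{C_1(F(W_0) - F^\star)}{\eta T} + C_2 \eta L M^2 + C_3 \eta^2 L^2 M^2 \tau^2 P^2 / S^2 ,
\end{equation*}
for absolute constants $C_1, C_2, C_3$. Choosing $\eta = \Theta(P/\sqrt{T})$ and requiring each right-hand-side term to be at most $\varepsilon$, elementary algebra yields the three lower bounds $T = \Omega((F(W_0) - F^\star)^2/\varepsilon^2)$, $T = \Omega(S^4 M^4 L^4/\varepsilon^2)$, and $T = \Omega(P^4 \tau^4)$ stated in the theorem. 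Taking $T^\star = \arg\min_{0 \leq t \leq T}\mathbb{E}\|\nabla F(\mu_t)\|^2$ then extracts the required iterate.

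The hard part will be making the mixing analysis precise in the presence of both randomized group averaging and bounded staleness simultaneously. Unlike classical gossip-SGD proofs, which rely on pairwise averaging ($S = 2$) over a fixed or gently-drifting communication graph, here the interacting group has variable size $S$ and membership changes dynamically across steps, so the spectral-gap argument must be adapted to a sequence of random doubly-stochastic matrices with the contraction tracked tightly as a function of $S/P$. Moreover, stale updates feed back recursively, since the gradient applied at step $t$ was computed from a model whose own staleness must be controlled. I would handle both challenges by using the periodic global synchronization every $\tau$ steps as a hard ``reset'' that bounds the horizon over which consensus drift can accumulate, which keeps the recursion finite and keeps the polynomial dependence on $\tau$ under control.
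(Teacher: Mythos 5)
Your overall strategy matches the paper's: a descent-lemma analysis on the virtual average $\mu_t$, a consensus bound on $\sum_i\|W_t^i-\mu_t\|^2$, folding in staleness by unrolling at most $\tau$ gradient displacements of norm $\eta M$, telescoping, and setting $\eta=\Theta(P/\sqrt{T})$ with $T\ge P^4\tau^4$ to absorb the staleness term. The one genuine divergence is the consensus bound, which you identify as ``the hard part'' and propose to attack via the spectral gap of the random group-averaging operator. The paper avoids this entirely: because a global allreduce every $\tau$ steps forces $\Gamma_{r_t}=\sum_i\|W_{r_t}^i-\mu_{r_t}\|^2=0$ at the last sync time $r_t$, an induction over the at most $\tau$ subsequent steps (Lemma~\ref{lem:fixedi}) shows $\E\|W_t^i-\mu_{r_t}\|^2\le 4\eta^2M^2(t-r_t)^2$, where the group-averaging step is handled only by Young's and Cauchy--Schwarz inequalities to show it does not \emph{increase} the deviation --- no contraction or mixing property of the random groups is ever invoked. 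This is both simpler and tighter than your route: your spectral-gap bound carries an extra $(P/S)^2$ factor, so your third error term becomes $C_3\eta^2L^2M^2\tau^2P^2/S^2$ rather than the paper's $64\eta^2S^2L^2M^2\tau^2$, which after substituting $\eta=P/\sqrt{T}$ requires $T=\Omega(P^4L^4M^4/(S^4\epsilon^2))$; for $S<\sqrt{P}$ this exceeds the $S^4M^4L^4/\epsilon^2$ term in the stated maximum, so as written your argument proves a strictly weaker version of the theorem in that regime. A second, smaller modeling discrepancy: in the paper only one group of $S$ nodes (out of $P$) updates per sequential step, so the inner-product and variance terms carry factors of $S/P$ and $S^2/P^2$ that your all-nodes-update descent inequality omits; this changes the bookkeeping but not the structure of the argument.
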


At a high level, this claim shows that the algorithm will eventually reach a point where the model average has negligible gradient, i.e., is at a local minimum. 
While this does not guarantee convergence to a global minimum, it matches the best achievable guarantees for SGD in the non-convex setting~\cite{ghadimi2013stochastic}. 
The convergence proof, provided as additional material, generalizes the  decentralized asynchronous framework of Nadiradze et al.~\cite{nadiradze2019popsgd}  since we allow for group communication, whereas they only consider pairwise interactions.

It is interesting to examine the rate at which convergence occurs --- for standard parameter settings, i.e. constant $\tau$, $S, M$ and $L$, 
the convergence (i.e., the rate at which we get to a point of negligible gradient) matches that of standard SGD, and the speedup  with respect to the number of nodes can be  \emph{linear} in $P$. 
More precisely, we can examine the three terms in the lower bound on the number of steps $T$ to convergence given by the Theorem. 
The first is the standard convergence rate for SGD. (Recall however that we express this bound in terms of \emph{total} steps, whereas $P / S$ such steps can occur in parallel.)
The second bounds the impact of the variance on the total number of iterations, whereas the third suggests that, to negate the impact of asynchrony, the algorithm has to execute for at least  $P^4 \tau^4$ steps. 
The convergence ensured by the first term is the best possible, and matches the rates for other decentralized algorithms, e.g.~\cite{syncring, asyncring}.

\subsubsection{Proof}
\label{detailedproof}

In this section, we focus on providing a formal proof for the convergence statement provided in Theorem~\ref{thm:convergence}.

\macsection{Preliminaries:} Let $\mathcal{K}$ be a set all subsets of  $\{1,2,...,P\}$, which have size $S$.
if set $K \in \mathcal{K}$ is chosen for interaction. we have that for each $i \in K$
\begin{align*}
W_{t+1/2}^i=W_t^i/S+\sum_{j \in K/\{i\}} {W_t^j}'/S.
\end{align*}
and
\begin{align*}
W_{t+1}^i=W_{t+1/2}^i-\eta \tilde{G}(W_t^i) \\
\end{align*}
Where for each process $i$, if $t \ge q_t^i > t-\tau$ is the last time it interacted before the step $t$, we have 
\begin{equation}
{W_t^i}'=W_{q_t^i+1/2}^i=W_{q_t^i+1}+\eta \tilde{G}(W_{q_t^i}^i)=W_t^i+\eta\tilde{G}(W_{q_t^i}^i).
\end{equation}

Let $\E_t$ be expectation which is conditioned on the entire history up to and including step $t$.
This means that 
\begin{align} \label{eqn:mulinear}
\E_t[\mu_{t+1}]&=\mu_t+ \nonumber\\&\frac{1}{\binom{P}{S}} \sum_{K \in \mathcal{K}} \sum_{i \in K} \E_t \Big[-\frac{\eta\tilde{G}(W_t^i)}{P}+\frac{\eta(S-1)\tilde{G}(W_{q_t^i}^i)}{PS}\Big] \nonumber\\&=\mu_t+\frac{S}{P} \sum_{i=1}^P
\Big(-\frac{\eta\nabla f(W_t^i)}{P}+\frac{\eta(S-1)\tilde{G}(W_{q_t^i}^i)}{PS}\Big).
\end{align}

\macsection{Potential Bound:} Now we derive bound on the potential $\Gamma_t=\sum_{i=1}^P \|W_t^i-\mu_t\|^2$ in expectation.
Observe that for each $t$, $\Gamma_{r_t}=0$, where $t \ge r_t > t-\tau$
is the last time global averaging happened.
Hence we need to bound the change over $\tau$ steps.
Using induction we show that:
\begin{lemma} \label{lem:fixedi}
for each $t \ge 0$ and process $i$:
\begin{equation}
\E\|W_t^i-\mu_{r_t}\|^2 \le 4\eta^2M^2(t-r_t)^2 \le 4\eta^2\tau^2M^2.
\end{equation}
\end{lemma}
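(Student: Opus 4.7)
The plan is to prove Lemma~\ref{lem:fixedi} by induction on $t - r_t$, where $r_t$ denotes the most recent global-averaging step. The base case $t = r_t$ is immediate: the standard allreduce at step $r_t$ synchronizes all replicas, so $W_{r_t}^i = \mu_{r_t}$ and the left-hand side is zero. For the inductive step, I would fix a process $i$ and case-split on whether $i$ participates in the group interaction at step $t$. If it does not, $W_{t+1}^i = W_t^i$ and the hypothesis propagates trivially (and if global averaging occurs at step $t+1$, the new base case at $r_{t+1} = t+1$ takes over).

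When $i$ belongs to the interaction group $K$ of size $S$, I would combine the recurrences stated just above the lemma with equation~(1) (substituting ${W_t^j}' = W_t^j + \eta \tilde{G}(W_{q_t^j}^j)$) to decompose
\begin{align*}
W_{t+1}^i - \mu_{r_t} &= \underbrace{\tfrac{1}{S}\!\sum_{j \in K}\!\bigl(W_t^j - \mu_{r_t}\bigr)}_{A} \\
&\quad + \underbrace{\tfrac{\eta}{S}\!\sum_{j \in K \setminus \{i\}}\!\tilde{G}(W_{q_t^j}^j) \;-\; \eta \tilde{G}(W_t^i)}_{B}.
\end{align*}
Then apply the Young-type bound $\|A+B\|^2 \le (1+\alpha)\|A\|^2 + (1+1/\alpha)\|B\|^2$ with $\alpha = 1/(t-r_t)$ (handling $t=r_t$ separately, where $A = 0$ because every $W_t^j = \mu_{r_t}$). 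Jensen's inequality for $\|\cdot\|^2$ together with the inductive hypothesis give $\E\|A\|^2 \le \tfrac{1}{S}\sum_{j\in K}\E\|W_t^j-\mu_{r_t}\|^2 \le 4\eta^2 M^2 (t-r_t)^2$. For $B$, the triangle inequality followed by Jensen on the $(S-1)$-term average and then the bounded-second-moment assumption yields $\E\|B\|^2 \le 2\eta^2 M^2 (S-1)^2/S^2 + 2\eta^2 M^2 \le 4\eta^2 M^2$.

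Plugging these into the Young bound and expanding $(t+1-r_t)^2 = (t-r_t)^2 + 2(t-r_t) + 1$, the choice $\alpha = 1/(t-r_t)$ makes the two cross-terms match exactly, collapsing the right-hand side to $4\eta^2 M^2 (t+1-r_t)^2$ and closing the induction; the corollary $\le 4\eta^2 \tau^2 M^2$ then follows from bounded staleness $t-r_t \le \tau$. The main obstacle I anticipate is keeping the absolute constants tight: since $\E\|\tilde G\|^2 \le M^2$ only controls the second moment and not a pointwise bound, $B$ must be squared \emph{after} applying triangle/Jensen on the norms, and $\alpha$ must scale with $t-r_t$ rather than being a fixed constant, so that the contribution of $\E\|B\|^2$ is exactly absorbed into the linear and constant growth terms of $(t+1-r_t)^2$ without inflating the prefactor $4$.
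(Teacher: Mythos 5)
Your proposal is correct and follows essentially the same route as the paper's own proof: the identical induction on $t-r_t$ with the trivial base case and the $i\notin K$ case handled separately, the same decomposition into the averaged-deviation term and the gradient term, Young's inequality with weight $\alpha = 1/(t-r_t)$, Jensen/Cauchy--Schwarz plus the second-moment bound giving the $4\eta^2 M^2$ contribution, and the exact-match expansion of $(t-r_t+1)^2$ to close the induction. No gaps.
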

\begin{proof}
Base case $t=r_t$ holds trivially. We assume that the claim holds for $t$ (and every process $i$), and we show that it holds for $t+1$.
Let $K$ be a set chosen for averaging. 
If $i \in K$ then  we have that
\begin{align*}
W_{t+1}^i=\sum_{j \in K} W_t^j/S-\eta \tilde G(W_t^i)+\sum_{j \in K/\{i\}} \eta \tilde G(W_{q_t^j}^j)/S.
\end{align*}
Hence, if $t > r_t$
\begin{align*}
\|W_{t+1}^i&-\mu_{r_t}\|^2=\\&\Big\|\sum_{j \in K} (W_t^j-\mu_{r_t})/S-\eta \tilde G(W_t^i)+\sum_{j \in K/\{i\}} \eta \tilde G(W_{q_t^j}^j)/S \Big\|^2 \\&\overset{Young}{\le} \frac{1}{S^2} (1+\frac{1}{t-r_t})\Big\|\sum_{j \in K} (W_t^j-\mu_{r_t}) \Big\|^2
+\\&\quad\quad(1+t-r_t)\Big\|-\eta \tilde G(W_t^i)+\sum_{j \in K/\{i\}} \eta \tilde G(W_{q_t^j}^j)/S\Big\|^2
\\ &\overset{Cauchy-Schwarz}{\le} \frac{1}{S}(1+\frac{1}{t-r_t})\sum_{j \in K} \Big\|W_t^j-\mu_{r_t}\Big\|^2
\\&\quad\quad\quad+2(1+t-r_t)\eta^2\Big\| \tilde G(W_t^i)\Big\|^2\\&\quad\quad\quad+\frac{2\eta^2(1+t-r_t)}{S^2}\Big\|\sum_{j \in K/\{i\}} \tilde G(W_{q_t^j}^j)\Big\|^2
\\ &\overset{Cauchy-Schwarz}{\le} \frac{1}{S}(1+\frac{1}{t-r_t})\sum_{j \in K} \Big\|W_t^j-\mu_{r_t}\Big\|^2
\\&\quad\quad\quad+2(1+t-r_t)\eta^2\Big\| \tilde G(W_t^i)\Big\|^2\\&\quad\quad\quad+\frac{2\eta^2(1+t-r_t)(S-1)}{S^2}\sum_{j \in K/\{i\}} \Big\|\tilde G(W_{q_t^j}^j)\Big\|^2
\end{align*}
Next we take expectations:
\begin{align*}
\E\|W_{t+1}^i&-\mu_{r_t}\|^2 \le \frac{1}{S}(1+\frac{1}{t-r_t})\sum_{j \in K} \E\Big\|W_t^j-\mu_{r_t}\Big\|^2
\\&\quad+2(1+t-r_t)\eta^2\E\Big\| \tilde G(W_t^i)\Big\|^2\\&\quad+\frac{2\eta^2(1+t-r_t)(S-1)}{S^2}\sum_{j \in K/\{i\}} \E\Big\|\tilde G(W_{q_t^j}^j)\Big\|^2
\end{align*}
Using the second moment bound we get
\begin{align*}
\E\|W_{t+1}^i&-\mu_{r_t}\|^2 \le \frac{1}{S}(1+\frac{1}{t-r_t})\sum_{j \in K} \E\Big\|W_t^j-\mu_{r_t}\Big\|^2
\\&\quad\quad\quad\quad\quad+2(1+t-r_t)\eta^2M^2\\&\quad\quad\quad\quad\quad+\frac{2\eta^2(1+t-r_t)(S-1)^2M^2}{S^2} \\
&\quad\quad\quad\quad \le \frac{1}{S}(1+\frac{1}{t-r_t})\sum_{j \in K} \E\Big\|W_t^j-\mu_{r_t}\Big\|^2\\&\quad\quad\quad\quad\quad+4(1+t-r_t)\eta^2M^2
\end{align*}
Finally using the assumption that claim holds for $t$ (and every $i$), we get that 
\begin{align*}
 \E\|W_{t+1}^i&-\mu_{r_t}\|^2 \le 4\eta^2M^2(t-r_t)^2(1+\frac{1}{t-r_t})\\&\quad\quad\quad\quad\quad+4(1+t-r_t)\eta^2M^2
 \\ &\quad\quad\quad\quad= 4(t-r_t+1)^2\eta^2M^2.
\end{align*}

If $i$ does not belong to $K$, claim for $t+1$ holds trivially (since $i$ does not perform SGD step).

\end{proof}
This allows us to show the following lemma:
\begin{lemma} \label{lem:fixedmu}
\begin{equation*}
\E\|\mu_t-\mu_{r_t}\|^2\le 4\eta^2M^2\tau^2.
\end{equation*}
\end{lemma}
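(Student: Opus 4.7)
The plan is to prove Lemma~\ref{lem:fixedmu} as a direct corollary of Lemma~\ref{lem:fixedi} by expressing the displacement of the average model as an average of the displacements of the individual local models, and then applying convexity of the squared norm. The key observation is that $\mu_t = \frac{1}{P}\sum_{i=1}^P W_t^i$ by definition, so writing $\mu_{r_t}$ as its own average over the $P$ nodes (which is valid since $\mu_{r_t}$ does not depend on $i$) gives the identity
\begin{equation*}
\mu_t - \mu_{r_t} \;=\; \frac{1}{P}\sum_{i=1}^P \bigl(W_t^i - \mu_{r_t}\bigr).
\end{equation*}

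Next, I would apply Jensen's inequality (equivalently, the Cauchy--Schwarz inequality applied to the sum of $P$ vectors) to pass the squared norm inside the average:
\begin{equation*}
\bigl\|\mu_t - \mu_{r_t}\bigr\|^2 \;\leq\; \frac{1}{P}\sum_{i=1}^P \bigl\|W_t^i - \mu_{r_t}\bigr\|^2.
\end{equation*}
Taking expectations of both sides and invoking Lemma~\ref{lem:fixedi} term by term — which yields $\E\|W_t^i - \mu_{r_t}\|^2 \le 4\eta^2 M^2 (t-r_t)^2 \le 4\eta^2 M^2 \tau^2$ for every $i$ — the $P$ identical bounds sum to $4\eta^2 M^2 \tau^2 P$, and dividing by $P$ produces the claimed inequality.

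There is essentially no substantive obstacle here; the statement is really a convexity-plus-uniform-bound corollary of the per-process potential bound already established. The only thing to be careful about is ensuring that $\mu_{r_t}$ is well-defined as a common reference point across all processes — this follows because at the last global synchronization step $r_t$, all local models $W_{r_t}^i$ agree and equal $\mu_{r_t}$, so $r_t$ is the same index for every $i$ in the window $(t-\tau, t]$ guaranteed by the bounded-staleness assumption (Assumption~\ref{lem:assumption}, part 3). Once this is noted, the three-line chain above completes the proof.
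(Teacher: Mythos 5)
Your proposal is correct and follows exactly the same route as the paper's proof: write $\mu_t - \mu_{r_t}$ as the average of $W_t^i - \mu_{r_t}$, apply Jensen's inequality to the squared norm, and invoke Lemma~\ref{lem:fixedi} on each term. The extra remark about $r_t$ being a common, well-defined reference index across processes is a sensible clarification but does not change the argument.
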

\begin{proof}
Observe that since $\mu_t=\sum_{i=1}^P W_t^i/P$, by Jensen's inequality we have that 
\begin{equation*}
\E\|\mu_t-\mu_{r_t}\|^2\le \sum_{i=1}^P \E\|W_t^i-\mu_{r_t}\|^2/P \overset{\text{Lemma \ref{lem:fixedi}}}{\le}
4\eta^2M^2\tau^2.
\end{equation*}
\end{proof}

Next we need another lemma which upper bounds discrepancy between 
$\mu_t$ and $W_{q_t^i}^i$ for each process $i$:

\begin{lemma} \label{lem:muandmodeloutdated}
For each $t \ge 0$ and process $i$ we have that:
\begin{equation*}
\E\|W_{q_t^i}^i-\mu_t\|^2 \le 16\eta^2M^2\tau^2.
\end{equation*}
\end{lemma}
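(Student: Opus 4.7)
The plan is to bound the quantity $\|W_{q_t^i}^i - \mu_t\|^2$ by splitting it via Young's (or the parallelogram) inequality into two pieces that the previous two lemmas already control, both measured against $\mu_{r_t}$, where $r_t$ is the most recent global-averaging step before $t$. Concretely, I would write
\begin{equation*}
\bigl\|W_{q_t^i}^i - \mu_t\bigr\|^2 \le 2\bigl\|W_{q_t^i}^i - \mu_{r_t}\bigr\|^2 + 2\bigl\|\mu_{r_t} - \mu_t\bigr\|^2,
\end{equation*}
take expectations of both sides, and handle the two terms separately.

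For the second term, Lemma~\ref{lem:fixedmu} applies directly and yields $\E\|\mu_{r_t} - \mu_t\|^2 \le 4\eta^2 M^2 \tau^2$. For the first term, I would invoke Lemma~\ref{lem:fixedi} at the earlier time index $q_t^i$, noting that the last global-averaging event before $q_t^i$ coincides with $r_t$ (since every process participates in the global averaging at step $r_t$, so process $i$'s last interaction satisfies $q_t^i \ge r_t$, giving $r_{q_t^i} = r_t$). Combined with the fact that $q_t^i - r_t \le t - r_t \le \tau$, Lemma~\ref{lem:fixedi} then gives
\begin{equation*}
\E\bigl\|W_{q_t^i}^i - \mu_{r_t}\bigr\|^2 \le 4\eta^2 M^2 (q_t^i - r_t)^2 \le 4\eta^2 M^2 \tau^2.
\end{equation*}
Summing with the factor of $2$ from Young's inequality yields the desired bound $16\eta^2 M^2 \tau^2$.

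The main obstacle, and the only nontrivial point in an otherwise short calculation, is the bookkeeping that justifies $r_{q_t^i} = r_t$. This requires being explicit that the periodic synchronization step of Algorithm~\ref{alg:WAGMAsgd} counts as an interaction for every process, so no process can have its last-interaction index strictly predate the most recent synchronization. Once that is established, the remainder is a one-line application of Young's inequality combined with the two previous lemmas, and the constants drop out immediately.
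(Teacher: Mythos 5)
Your proposal is correct and follows essentially the same route as the paper: the same splitting $\E\|W_{q_t^i}^i-\mu_t\|^2 \le 2\E\|W_{q_t^i}^i-\mu_{r_t}\|^2+2\E\|\mu_t-\mu_{r_t}\|^2$, with the first term handled by Lemma~\ref{lem:fixedi} applied at time $q_t^i$ (using $r_{q_t^i}=r_t$) and the second by Lemma~\ref{lem:fixedmu}. Your explicit justification of $r_{q_t^i}=r_t$ via the global synchronization counting as an interaction is actually slightly more careful than the paper's one-line assertion of the same fact.
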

\begin{proof}
Recall that for each $t$ and $i$, $t \ge q_t^i > t-\tau$.
Hence we have that $r_t=r_{q_t^i}$.
From Lemmas \ref{lem:fixedmu} and \ref{lem:fixedi} it follows that
$\E\|\mu_{r_t}-W_{q_t^i}^i\|^2 \le 4\eta^2M^2\tau^2$ and $\E\|\mu_t-\mu_{r_t}\|^2 \le 4\eta^2M^2\tau^2$.
Thus by using Cauchy-Schwarz inequality we get that 
\begin{equation*}
\begin{split}
\E\|W_{q_t^i}^i-\mu_t\|^2 \le 2\E\|W_{q_t^i}^i-\mu_{r_t}\|^2+2\E\|\mu_t-\mu_{r_t}\|^2 \\ \le 16\eta^2M^2\tau^2.
\end{split}
\end{equation*}

\end{proof}

Finally we upper bound the Gamma potential in expectation:
\begin{lemma} \label{lem:GammaBound}
For any $t \ge 0$, we have that 
\begin{equation}
\E[\Gamma_t]=\sum_{i=1}^P \E\|W_t^i-\mu_t\|^2 \le 16P\eta^2M^2\tau^2.
\end{equation}
\end{lemma}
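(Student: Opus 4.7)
The plan is to bound $\sum_{i=1}^P \E\|W_t^i - \mu_t\|^2$ by reducing it to quantities that have already been controlled in Lemmas \ref{lem:fixedi} and \ref{lem:fixedmu}. The natural intermediate anchor is the snapshot average $\mu_{r_t}$ taken at the last global synchronization time $r_t$ (for which $t - r_t < \tau$), since both $W_t^i$ and $\mu_t$ have been shown to stay close to $\mu_{r_t}$.

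First, for each process $i$, I would apply the standard splitting inequality $\|a+b\|^2 \le 2\|a\|^2 + 2\|b\|^2$ (equivalently, Cauchy--Schwarz with two terms) to write
\begin{equation*}
\|W_t^i - \mu_t\|^2 \;\le\; 2\|W_t^i - \mu_{r_t}\|^2 \;+\; 2\|\mu_{r_t} - \mu_t\|^2.
\end{equation*}
Taking expectations and summing over $i = 1, \dots, P$ yields
\begin{equation*}
\E[\Gamma_t] \;\le\; 2\sum_{i=1}^P \E\|W_t^i - \mu_{r_t}\|^2 \;+\; 2P\cdot \E\|\mu_t - \mu_{r_t}\|^2.
\end{equation*}

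Now I would invoke Lemma \ref{lem:fixedi}, which gives $\E\|W_t^i - \mu_{r_t}\|^2 \le 4\eta^2 M^2 \tau^2$ for each $i$, so the first sum is at most $8P\eta^2 M^2 \tau^2$. Then I would invoke Lemma \ref{lem:fixedmu}, which gives $\E\|\mu_t - \mu_{r_t}\|^2 \le 4\eta^2 M^2 \tau^2$, so the second term contributes another $8P\eta^2 M^2 \tau^2$. Adding the two yields the claimed bound $16 P \eta^2 M^2 \tau^2$.

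There is no genuine obstacle here: the lemma is a direct corollary of the previous two, and the only step that requires any thought is choosing to anchor the comparison at $\mu_{r_t}$ rather than trying to bound $\|W_t^i - \mu_t\|^2$ directly (which would be harder because both quantities change during the staleness window). All of the real work --- the inductive control of the drift across a synchronization period --- was already done in Lemma \ref{lem:fixedi}, with Lemma \ref{lem:fixedmu} then a Jensen-inequality consequence. The constants $4 + 4 = 8$, doubled by the splitting inequality, give exactly the factor $16$ in the statement, so no slack is wasted.
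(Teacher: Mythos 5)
Your proof is correct and is essentially identical to the paper's: both split $\|W_t^i-\mu_t\|^2$ via the anchor $\mu_{r_t}$ using $\|a+b\|^2\le 2\|a\|^2+2\|b\|^2$, then apply Lemmas~\ref{lem:fixedi} and~\ref{lem:fixedmu} to obtain $16\eta^2M^2\tau^2$ per process and sum over $i$. The constant accounting ($2\cdot 4 + 2\cdot 4 = 16$) matches the paper exactly.
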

\begin{proof}
For each process $i$, we have that:
\begin{equation}
\E\|W_t^i-\mu_t\|^2 \le 2\E\|W_t^i-\mu_r^t\|^2+2\E\|\mu_t-\mu_r^t\|^2 \le 16\eta^2M^2\tau^2.
\end{equation}
Where first inequality is Cauchy-Schwarz and second inequality is combination
of Lemmas \ref{lem:fixedi} and \ref{lem:fixedmu}.
The upper bound we need to prove the lemma follows trivially.
\end{proof}

\macsection{Convergence Proof:} We are now ready to state and prove our main theorem. We provide a slightly more general version, which implies Theorem~\ref{thm:convergence}.

\begin{theorem}
 For a smooth non-convex objective function $f$, whose minimum $x^*$ we are trying to find using WAGMA algorithm and the constant learning rate $\alpha = \frac{P}{\sqrt{T}}$, where $T \ge P^4\tau^4$ is the number of iterations:
 \begin{align*}
\frac{1}{T} \sum_{t=0}^{T-1} \E\|\nabla f(\mu_t)\|^2 &\le \frac{2(f(\mu_0)-f(x^*))}{\sqrt{T}} +\\&  \frac{8S^2M^2}{\sqrt{T}}
 +\frac{64 S^2L^2M^2}{\sqrt{T}}.
\end{align*}
 \end{theorem}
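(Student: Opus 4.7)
The plan is to follow the standard non-convex SGD convergence template (descent via $L$-smoothness, telescope, and rearrange), using the potential bounds in Lemmas~\ref{lem:fixedi}--\ref{lem:GammaBound} to absorb the extra error terms caused by model discrepancy and staleness. Because equation (\ref{eqn:mulinear}) already gives us a clean expression for $\E_t[\mu_{t+1}] - \mu_t$, the proof reduces to carefully controlling how this "virtual gradient step" differs from an ideal step $-\frac{\eta}{P}\nabla f(\mu_t)$.

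First I would invoke $L$-smoothness on the averaged iterate,
\begin{equation*}
\E_t[f(\mu_{t+1})] \le f(\mu_t) + \bigl\langle \nabla f(\mu_t),\, \E_t[\mu_{t+1}]-\mu_t \bigr\rangle + \tfrac{L}{2}\,\E_t\|\mu_{t+1}-\mu_t\|^2,
\end{equation*}
and plug in (\ref{eqn:mulinear}). After simplifying the coefficients (the fresh-gradient contribution $-\frac{S\eta}{P^2}\sum_i \nabla f(W_t^i)$ and the stale-gradient contribution $\frac{(S-1)\eta}{P^2}\sum_i \tilde G(W_{q_t^i}^i)$), I would decompose both sums around $\nabla f(\mu_t)$. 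The fresh part yields $-\frac{S\eta}{P}\nabla f(\mu_t)$ plus a residual controlled via Lipschitzness and Lemma~\ref{lem:GammaBound}; the stale part yields $\frac{(S-1)\eta}{P}\nabla f(\mu_t)$ plus a residual controlled via Lemma~\ref{lem:muandmodeloutdated}. The net descent coefficient is $\eta/P$ as desired.

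Next I would bound the quadratic term $\E_t\|\mu_{t+1}-\mu_t\|^2$. Since only nodes in the chosen group $K$ move by $-\eta \tilde G(W_t^i) + \frac{1}{S}\sum_{j\in K\setminus\{i\}}\eta\tilde G(W_{q_t^j}^j)$ plus averaging terms, I would apply Cauchy--Schwarz componentwise and use the second-moment bound $M^2$ to obtain an $O(\eta^2 S^2 M^2/P^2)$ bound, which produces the $S^2 M^2$ coefficient in the final inequality. The residual cross terms obtained earlier can similarly be bounded by Young's inequality, trading a small fraction of $\|\nabla f(\mu_t)\|^2$ against $L^2 \eta^2 M^2 \tau^2$ summands whose aggregate contribution is $O(S^2 L^2 M^2 \eta^3 \tau^2)$.

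Finally I would telescope from $t=0$ to $T-1$, take full expectation, and divide by $T$ after rearranging to isolate $\frac{1}{T}\sum_t \E\|\nabla f(\mu_t)\|^2$. Setting $\eta = P/\sqrt{T}$ converts the leading terms to the three $1/\sqrt T$ summands in the claim, and the condition $T \ge P^4\tau^4$ is exactly what is needed to guarantee that the asynchrony-induced residuals of order $\eta^3 L^2 M^2 \tau^2$ and higher are dominated by the main terms. The main obstacle I anticipate is the careful bookkeeping in the second step: the coefficient cancellation that leaves exactly $-\frac{\eta}{P}\nabla f(\mu_t)$ must be tight, and one must verify that the stale-gradient residuals combine with the discrepancy residuals without introducing extra factors of $P$ or $S$ that would spoil the linear speedup in the final bound.
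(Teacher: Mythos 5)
Your plan follows essentially the same route as the paper's proof: descent lemma on $\mu_t$, the linearity identity (\ref{eqn:mulinear}) to split the drift into a fresh-gradient part (residual controlled by Lemma~\ref{lem:GammaBound} via Lipschitzness) and a stale-gradient part (residual controlled by Lemma~\ref{lem:muandmodeloutdated}), a Cauchy--Schwarz plus second-moment bound on the quadratic term, and a telescope with $\eta = P/\sqrt{T}$ and $T \ge P^4\tau^4$. The only minor discrepancy is the surviving descent coefficient, which the paper's bookkeeping yields as $\eta/(2P)$ rather than $\eta/P$ after the Young-inequality trades; this affects only the constants and you correctly flag it as the step requiring care.
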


\begin{proof}
Recall that $\E_t$ is expectation which is conditioned on the entire history up to and including step $t$.
By descent Lemma we know that
\begin{equation} \label{eqn:descentsmoothness}
\begin{split}
\E_t[f(\mu_{t+1})] \le f(\mu_t)+\E_t\langle\nabla f(\mu_t), \mu_{t+1}-\mu_t \rangle \\ +\frac{L}{2} \E_t\|\mu_{t+1}-\mu_t\|^2. \quad\ \ 
\end{split}
\end{equation}
First, we look at $\E_t\|\mu_{t+1}-\mu_t\|^2$:
\begin{align*}
\E_t\|\mu_{t+1}-\mu_t\|^2=\quad\quad\quad\quad\quad\quad\quad\quad\quad\quad\quad\quad\quad\quad\quad\quad \\\frac{1}{\binom{P}{S}} \sum_{K \in \mathcal{K}}  \E_t \Bigg\|\sum_{i \in K} \Big(-\frac{\eta\tilde G(W_t^i)}{P}+\frac{\eta(S-1)\tilde{G}(W_{q_t^i}^i)}{PS}\Big)  \Bigg\|^2.
\end{align*}
Next we apply Cauchy-Schwarz inequality and get:
\begin{align} \label{eqn:squarechange}
\E_t\|\mu_{t+1}&-\mu_t\|^2\le \nonumber\\& \frac{S}{\binom{P}{S}} \sum_{K \in \mathcal{K}}\sum_{i \in K}  \E_t \Bigg\| -\frac{\eta\tilde G(W_t^i)}{P}+\frac{\eta(S-1)\tilde{G}(W_{q_t^i}^i)}{PS}  \Bigg\|^2 \nonumber \\ &= \frac{S^2\eta^2}{P^3} \sum_{i=1}^P 
\E_t \Bigg\| -\tilde G(W_t^i)+\frac{(S-1)\tilde{G}(W_{q_t^i}^i)}{S}  \Bigg\|^2 \nonumber \\ &\overset{Cauchy-Schwarz}{\le} 
\frac{2S^2\eta^2}{P^3} \sum_{i=1}^P \E_t \|-\tilde G(W_t^i) \|^2+\nonumber\\&\quad\quad\quad\quad\quad\quad\quad \frac{2(S-1)^2\eta^2}{P^3} \sum_{i=1}^P\E_t \|\tilde{G}(W_{q_t^i}^i)\|^2
\nonumber \\ &\le \frac{2S^2M^2\eta^2}{P^2}+\frac{2(S-1)^2\eta^2}{
P^3} \sum_{i=1}^P\|\tilde{G}(W_{q_t^i}^i)\|^2.
\end{align}
Now, we upper bound $\E_t\langle\nabla f(\mu_t), \mu_{t+1}-\mu_t \rangle$. Using (\ref{eqn:mulinear}), we have that 
\begin{align*}
\E_t&\langle\nabla f(\mu_t), \mu_{t+1}-\mu_t \rangle=\\& \frac{\eta S}{P^2} \sum_{i=1}^P \langle \nabla f(\mu_t), -\nabla f(W_t^i)+\frac{S-1}{S}
\tilde{G}(W_{q_t^i}^i) \rangle \\&= \frac{\eta S}{P^2} \sum_{i=1}^P\langle \nabla f(\mu_t),  \frac{S-1}{S}
\tilde{G}(W_{q_t^i}^i)\rangle+\\&\frac{\eta S}{P^2} \sum_{i=1}^P\langle \nabla f(\mu_t),  \nabla f (\mu_t)-\nabla f(W_t^i) \rangle-\frac{\eta S}{P} \|\nabla f(\mu_t)\|^2 \\& \overset{Young}{\le} \frac{\eta S}{P^2} \sum_{i=1}^P\langle \nabla f(\mu_t),  \frac{S-1}{S}
\tilde{G}(W_{q_t^i}^i)\rangle+\frac{\eta }{4P}\|\nabla f(\mu_t)\|^2\\&\quad\quad+\frac{\eta S^2}{P^2} \sum_{i=1}^P\|\nabla f (\mu_t)-\nabla f(W_t^i)\|^2-\frac{\eta P}{S} \|\nabla f(\mu_t)\|^2 \\&\quad \le \frac{\eta (S-1) }{P^2} \sum_{i=1}^P\langle \nabla f(\mu_t),  
\tilde{G}(W_{q_t^i}^i)\rangle\\&\quad\quad+\frac{\eta S^2L^2}{P^2} \sum_{i=1}^P\|\mu_t-W_t^i\|^2-\frac{\eta (S-\frac{1}{4})}{P}\|\nabla f(\mu_t)\|^2.
\end{align*}
Using the above inequality and (\ref{eqn:squarechange}) in inequality (\ref{eqn:descentsmoothness}) we get that 
\begin{align*}
\E_t[f(\mu_{t+1})] &\le f(\mu_t) + \frac{2S^3M^2\eta^2}{P^3}+\frac{2(S-1)^2\eta^2}{P^3} \|\tilde{G}(W_{q_t^i}^i)\|^2\\&\ \  +\frac{\eta (S-1)}{P^2} \sum_{i=1}^P\langle \nabla f(\mu_t),  \tilde{G}(W_{q_t^i}^i)\rangle
+\\&\ \  \frac{\eta S^2L^2}{P^2} \sum_{i=1}^P\|\mu_t-W_t^i\|^2-\frac{\eta (S-\frac{1}{4})}{P}\|\nabla f(\mu_t)\|^2.
\end{align*}
Next we use $\E[f(\mu_{t+1})]=\E[\E_t[f(\mu_{t+1})]]$:
\begin{align} \label{eqn:whatever}
\E[f(\mu_{t+1})] &\le \E[f(\mu_t)] + \frac{2S^2M^2\eta^2}{P^2}\nonumber\\&
+\frac{2(S-1)^2\eta^2}{P^3} \sum_{i=1}^P \E\|\tilde{G}(W_{q_t^i}^i)\|^2\nonumber\\&+\frac{\eta (S-1)}{P^2} \sum_{i=1}^P \E\langle \nabla f(\mu_t),  
\tilde{G}(W_{q_t^i}^i)\rangle \nonumber\\& +
\frac{\eta S^2L^2}{P^2} \sum_{i=1}^P\E\|\mu_t-W_t^i\|^2 \nonumber\\& 
- \frac{\eta (S-\frac{1}{4})}{P}\E\|\nabla f(\mu_t)\|^2.
\end{align}
Recall that by the second moment bound $\E\|\tilde{G}(W_{q_t^i}^i)\|^2$ is at most $M^2$,
and by Lemma \ref{lem:GammaBound} $\sum_{i=1}^P\E\|\mu_t-W_t^i\|^2 \le 16P\eta^2M^2\tau^2$.
Hence we can rewrite (\ref{eqn:whatever}) as 
\begin{align} \label{eqn:whatever2}
\E[f(\mu_{t+1})] &\le \E[f(\mu_t)] + \frac{2S^2M^2\eta^2}{P^2}+\frac{2(S-1)^2\eta^2M^2}{P^2} \nonumber\\&+\frac{\eta (S-1)}{P^2} \sum_{i=1}^P \E\langle \nabla f(\mu_t),  
\tilde{G}(W_{q_t^i}^i)\rangle \nonumber
\\& +
\frac{16\eta^3 S^2L^2M^2\tau^2}{P} -\frac{\eta (S-\frac{1}{4})}{P}\E\|\nabla f(\mu_t)\|^2.
\end{align}

We proceed by upper bounding $\sum_{i=1}^P\E\langle \nabla f(\mu_t), 
\tilde{G}(W_{q_t^i}^i)\rangle$:
\begin{equation*}
\begin{split}
\sum_{i=1}^P\E \langle \nabla f(\mu_t),  
\tilde{G}(W_{q_t^i}^i)\rangle=\sum_{i=1}^P \E\langle \nabla f(\mu_t), \nabla f(W_{q_t^i}^i)\rangle \quad\quad\quad\quad\quad\quad
\\ =\sum_{i=1}^P \E\langle \nabla f(\mu_t),  -\nabla f(\mu_t)+
\nabla f(W_{q_t^i}^i)\rangle+P\E\|\nabla f(\mu_t)\|^2  \quad\quad
\\ \overset{Young}{\le} \frac{P}{4(S-1)}\E\|\nabla f(\mu_t)\|^2 + P\E\|\nabla f(\mu_t)\|^2 + \quad\quad\quad\quad\quad\quad
\\ (S-1)\sum_{i=1}^P \E\|\nabla f(\mu_t)-\nabla f(W_{q_t^i}^i)\|^2 \quad\quad\quad\quad\quad\quad\quad\quad
\\ \le (S-1)L^2 \sum_{i=1}^P \|W_{q_t^i}^i-\mu_t\|^2 + \frac{P(4S-3)}{4(S-1)}\E\|\nabla f(\mu_t)\|^2 \quad\quad
\\ \overset{\text{Lemma \ref{lem:muandmodeloutdated}}}{\le} 16P(S-1)L^2\eta^2M^2\tau^2 + \frac{P(4S-3)}{4(S-1)}\E\|\nabla f(\mu_t)\|^2. \quad\quad
\end{split}
\end{equation*}

Plugging the above inequality in (\ref{eqn:whatever2}) we get that 
\begin{equation*}
\begin{split}
\E[f(\mu_{t+1})] \le \E[f(\mu_t)] + \frac{2S^2M^2\eta^2}{P^2}+\frac{2(S-1)^2\eta^2M^2}{P^2}\\ +\frac{16\eta^3 (S-1)^2L^2M^2\tau^2}{P} +\frac{\eta(S-1+\frac{1}{4})}{P}\E\|\nabla f(\mu_t)\|^2
\\ + \frac{16\eta^3 S^2L^2M^2\tau^2}{P} -\frac{\eta (S-\frac{1}{4})}{P}\E\|\nabla f(\mu_t)\|^2 \quad\quad\quad\quad\ 
\\ \le \E[f(\mu_t)] + \frac{4S^2M^2\eta^2}{P^2} +\frac{32\eta^3 S^2L^2M^2\tau^2}{P} \quad\quad\quad\quad \\ -\frac{\eta}{2P}\E\|\nabla f(\mu_t)\|^2. \quad\quad\quad
\end{split}
\end{equation*}
Next we sum the above inequality for $t=0$ to $T-1$ and rearrange terms:

\begin{align*}
\frac{\eta}{2P} \sum_{t=0}^{T-1} \E\|\nabla f(\mu_t)\|^2 &\le f(\mu_0)-\E[f(\mu_T)]+\frac{4S^2M^2\eta^2T}{P^2}
 \\& +\frac{32\eta^3 S^2L^2M^2\tau^2T}{P}
\end{align*}
Now we divide the above inequality by $\frac{\eta}{2PT}$ and use the fact that $\E[f(\mu_T)] \ge f(x^*)$ :
\begin{align*}
\frac{1}{T} \sum_{t=0}^{T-1} \E\|\nabla f(\mu_t)\|^2 &\le \frac{2P(f(\mu_0)-f(x^*))}{\eta T}+\frac{8S^2M^2\eta}{P}
 \\& +64\eta^2 S^2L^2M^2\tau^2.
\end{align*}
By plugging $\eta=\frac{P}{\sqrt{T}}$ and noticing that for $T \ge P^4\tau^4$, $\eta \le \frac{1}{P\tau^2}$ we get :
\begin{align*}
\frac{1}{T} \sum_{t=0}^{T-1} \E\|\nabla f(\mu_t)\|^2 &\le \frac{2(f(\mu_0)-f(x^*))}{\sqrt{T}}  +\frac{8S^2 M^2}{\sqrt{T}} \big( 1 + 8 L^2 \big).
\end{align*}
\end{proof}

\section{Experimental Evaluation}
\label{evaluate}

We conduct our experiments on the CSCS Piz Daint supercomputer. Each Cray XC50 compute node contains a 12-core Intel Xeon E5-2690 CPU with 64 GB RAM, and one NVIDIA Tesla P100 with 16 GB memory. The compute nodes are connected by Cray Aries interconnect in a Dragonfly topology. The communication library is Cray MPICH 7.7.2. We use one MPI process per node and utilize the GPU for acceleration in all following experiments.
We evaluate three different deep learning problems, including image classification (ResNet-50~\cite{he2016deep} on ImageNet~\cite{deng2009imagenet}), machine translation (Transformer~\cite{vaswani2017attention} on WMT17), 
and deep reinforcement learning (PPO~\cite{wijmans2019decentralized} for navigation in Habitat~\cite{savva2019habitat}). For throughput tests, we run the number of nodes until reaching a point where batch size is too large to converge~\cite{shallue2018measuring}.

\subsection{Baselines}

We compare our WAGMA-SGD with the state-of-the-art data-parallel SGD variants, including Allreduce-SGD~\cite{sergeev2018horovod,deep500}, local SGD~\cite{lin2018don,stich2018local}, gossip-based SGD variants
(D-PSGD~\cite{syncring}, AD-PSGD~\cite{asyncring}, and SGP~\cite{assran2018stochastic}), and eager-SGD~\cite{li2020taming}.
Unless mentioned specifically, the synchronization period of local SGD is set to one, namely calling a standard allreduce to average the models in each training iteration, which essentially is a synchronous SGD. For SGP, we evaluate its performance with different number of communication neighbors~\cite{assran2018stochastic}. For more detailed discussion about the baselines, please refer to Section~\ref{sec:bg}.

\subsection{Image Classification with Simulated Workload Imbalance}
\label{sec:imagenet}

Residual Networks (ResNet)~\cite{he2016deep} are pervasively used in computer vision tasks. To evaluate their performance, we train ResNet-50 v1 (total 25,559,081 trainable parameters) on ImageNet using TensorFlow~\cite{tensorflow2015-whitepaper} as the basic platform. Although the training workload is balanced due to the input size being fixed, performance variability is observed when training on multi-tenant cloud systems~\cite{asyncring,li2020taming,schad2010runtime} due to resource sharing. To simulate the same degree of imbalance, we randomly select two processes at every training step to inject a certain amount of delay (320 ms). This simulated load imbalance also helps us to compare the robustness of the parallel SGD variants with the most popular deep learning benchmark. For WAGMA-SGD, we set the synchronization period $\tau = 10$, and the group size $S = \sqrt P$. 
This group setting is chosen to balance the trade-off between the faster mixing time, and improved convergence, provided by larger group size as discussed in the previous section, and the inherent increased communication overhead due to increasing $S$. In particular, this setting allows roughly $\sqrt P$ groups of $\sqrt P$ nodes each to perform reductions in parallel, on average.
Both $P$ and $S$ are power-of-two in our experimental configuration.

\begin{figure}[t]
\centerline{\includegraphics[width=.82\linewidth]{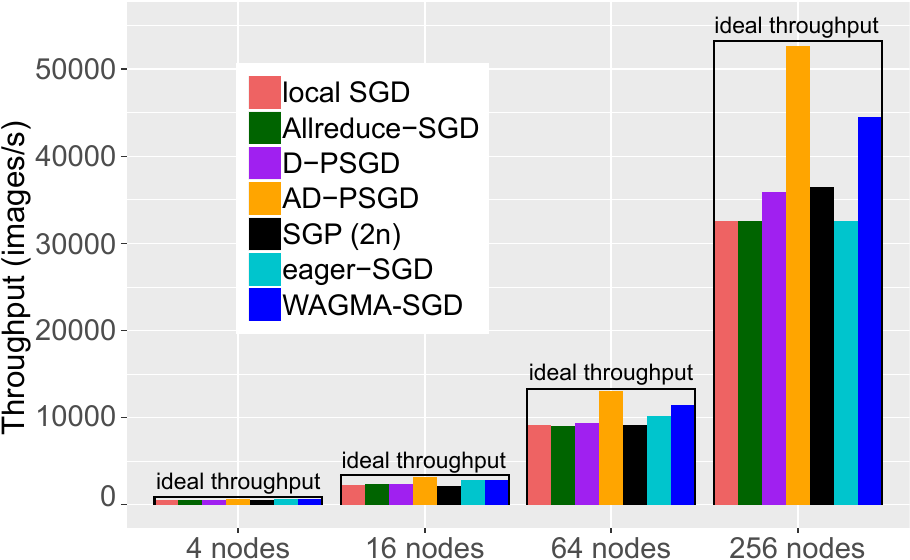}}
\caption{Throughput comparison between different parallel SGD algorithms for ResNet-50 on ImageNet with simulated load imbalance. Local batch size is 128. (AD-PSGD achieves the highest throughput but with much less accuracy.)}
\label{imagenettp}
\end{figure}

Fig.~\ref{imagenettp} shows the training throughput as the number of GPU nodes increases from 4 to 256, and the top of the rectangle
wrapping each cluster indicates the ideal throughput without communication overhead.
Compared with local SGD, Allreduce-SGD (implemented in Deep500~\cite{deep500}), D-PSGD, SGP (two communication neighbors), and eager-SGD when training on 64 GPU nodes,
WAGMA-SGD achieves 1.25x, 1.26x, 1.23x, 1.25x, and 1.13x speedup, respectively. The speedup becomes larger as the 
number of GPU nodes increases to 256: WAGMA-SGD achieves up to 1.37x speedup. The only algorithm with higher throughput than WAGMA-SGD is AD-PSGD, in which the asynchronous communication is completely overlapped with the computation.
These results show that WAGMA-SGD can better handle the unbalanced workload than the synchronous SGD algorithms (i.e., local SGD, Allreduce-SGD, D-PSGD, and SGP), as well as the bounded-staleness eager-SGD variant. In the latter case, while staleness is bounded, the algorithm still conducts a global collective communication for gradient averaging in each training iteration. In contrast, WAGMA-SGD keeps the collectives within each group, and thus has a better parallel scalability.

\begin{figure}[!h]
  \centering
    \centering\includegraphics[width=.82\linewidth]{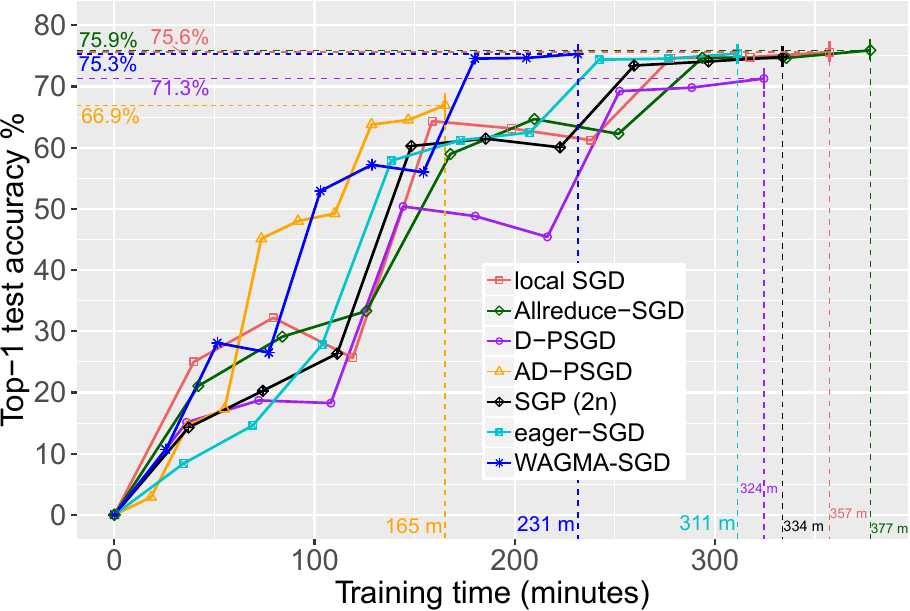}
  \caption{Top-1 validation accuracy of ResNet-50 on ImageNet training for 90 epochs using 64 GPU nodes. 
  Each point is at the boundary of every 10 epochs.} 
  \label{imagenettop1test}
\end{figure}

Fig.~\ref{imagenettop1test} presents the Top-1 validation accuracy when training for 90 epochs on 64 nodes, with a total batch size of 4,096 and simulated load imbalance. We can see that the accuracy of WAGMA-SGD (75.3\%) is very close to the standard Allreduce-SGD (75.9\%, in accordance with that in MLPerf~\cite{mattson2020mlperf}) and local SGD (75.6\%), and the slightly decreased accuracy is caused by the staled model parameters in the unbalanced workload environment; WAGMA-SGD significantly reduces the training time (e.g., 1.45x, 1.54x, and 1.63x speedup over SGP, local SGD, and Allreduce-SGD, respectively). Gossip-based SGD algorithms, such as D-PSGD and the higher-throughput AD-PSGD, attain much lower accuracy than the other variants. This can be explained by the fact that the algorithms have not fully converged, requiring more steps to be taken to achieve comparable accuracy~\cite{nadiradze2019popsgd}. We 
further train 130 epochs for AD-PSGD and linearly scale the learning rate, and the accuracy is improved to 70.4\%; however, it costs 238 minutes (longer than WAGAM-SGD which achieves a higher accuracy). Similarly, we improve the accuracy of D-PSGD to 74.4\% by doubling the communication steps in each iteration, which costs 410 minutes for 90 epochs. For SGP, we set and tune the number of communication neighbors to achieve the highest generalization using a directed exponential graph~\cite{assran2018stochastic}, which causes it to achieve higher accuracy (74.8\%) than D-PSGD and AD-PSGD, yet still lower than WAGMA-SGD. Note that the default setting for the number of communication neighbors in SGP is one, whereas we set it to two for better generalization performance. Overall, WAGMA-SGD achieves the highest accuracy-vs-time among all parallel SGD variants and quite robust to the unbalanced workloads.

We also train the model without simulated load imbalance for 90 epochs on 64 nodes, with a total batch size of 8,192. Local SGD, WAGMA-SGD, AD-PSGD, D-PSGD, and SGP achieve 1.52, 1.55, 1.61, 1.57, and 1.51 global\_step/s for the training throughput, respectively; and 75.3\%, 75.0\%, 67.1\%, 71.2\%, and 74.5\% for the Top-1 test accuracy, respectively. These results show that, when the workload is balanced, there is no big difference for the training throughput between the synchronous and asynchronous parallel SGD algorithms; WAGMA-SGD performs as well as the synchronous parallel SGD in terms of the model accuracy.

By setting the group size $S = \sqrt P = 8$, WAGMA-SGD has a faster model update propagation speed (globally propagate only using $\log_{S}P = 2$ iterations) than the gossip-based algorithms (globally propagate using at least $\log_{2}P = 6$ iterations), which makes WAGMA-SGD achieve higher accuracy. This is consistent with our analysis in Section~\ref{convergeproof}. 

\begin{figure}[t]
\centerline{\includegraphics[width=.82\linewidth]{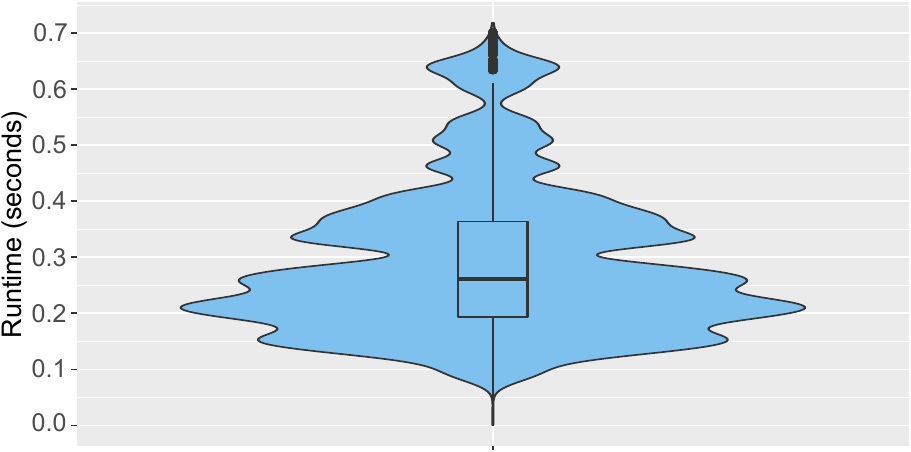}}
\caption{Runtime distribution of different sentences on a P100 GPU for a Transformer network on WMT17.}
\label{transimb}
\end{figure}

To analyze the convergence properties of WAGMA-SGD further, we conduct additional experiments. \one{} In the first experiment, we remove the wait-avoiding group collectives in WAGMA-SGD and only keep standard allreduce operations on the synchronization points, which is essentially equivalent to local SGD with a synchronization period $\tau = 10$. This causes the top-1 validation accuracy to sharply drop to 68.5\%. \two{} In a second experiment, we execute group model averaging without using the dynamic grouping strategy (i.e., the groups are fixed). In this case, the top-1 validation accuracy drops to 72.2\%. \three{} We also experiment with increasing the group size to 64 (i.e., a global collective). While accuracy does not increase, the throughput drops by factor of 1.24x. \four{} Lastly, we decrease the group size to 4 and observe that the top-1 validation accuracy drops to 72.8\%. 

The results from experiments \one{} and \two{} indicate that the combination of group allreduce operations and the dynamic grouping strategy is essential to achieve good generalization performance. The results from experiments \three{} and \four{} demonstrate that $S = \sqrt P$ empirically exhibits the best performance among different group size settings.

\subsection{Machine Translation}
\label{eval:transformer}

Transformers are sequence-to-sequence transducers that can be used to translate a sequence of words from one language to another.
We use the standard-sized Transformer network~\cite{vaswani2017attention}, which has 61,362,176 trainable parameters, to train English to German translation WMT17 dataset using TensorFlow as the basic platform.
While training the model, the computation overhead changes
with the length of the input and output sentences. The samples in the training dataset typically consist of sentences in various lengths and thus the training workload is unbalanced. As shown in Fig.~\ref{transimb}, even when using a bucketing strategy to group sentences with similar lengths, there is a high variance in workload size between samples. Specifically, in our experiment each local batch contains equal number of sentences sampled from a randomly selected bucket, where the maximum local batch size is set to 8,192 tokens. For WAGMA-SGD, we set the synchronization period $\tau = 8$ and the group size $S = \sqrt P$.

\begin{figure}[!ht]
\centerline{\includegraphics[width=.82\linewidth]{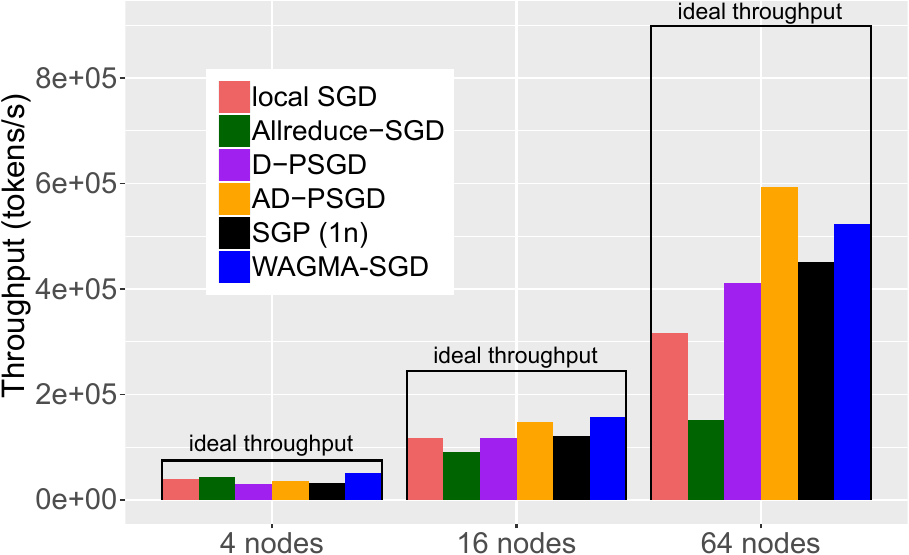}}
\caption{Throughput comparison between different parallel SGD algorithms for Transformer on WMT17.}
\label{transtp}
\end{figure}

Fig.~\ref{transtp} presents the training throughput as the number of GPU nodes increases from 4 to 64, where the top of the rectangle
indicates the ideal throughput without communication overhead.
On 16 GPU nodes, WAGMA-SGD achieves the highest throughput, compared with local SGD, Allreduce-SGD (implemented in Horovod~\cite{sergeev2018horovod}), D-PSGD, 
AD-PSGD, and SGP (one communication neighbor). When the number of GPU nodes increases to 64, as with image classification WAGMA-SGD exhibits a lower throughput than AD-PSGD but higher than the other variants. Observe that on 64 nodes, all algorithms perform far worse than the ideal throughput. We believe that this effect stems from the balance of the number of parameters (occupying 245 MB alone) vs. the operational intensity to compute backpropagation. Since transformer networks mostly consist of tensor contractions implemented as batched matrix products, which utilize GPUs well, communication overhead dominates and not even AD-PSGD manages to overlap communication with computation.

\begin{figure}[ht!]
\centerline{\includegraphics[width=.82\linewidth]{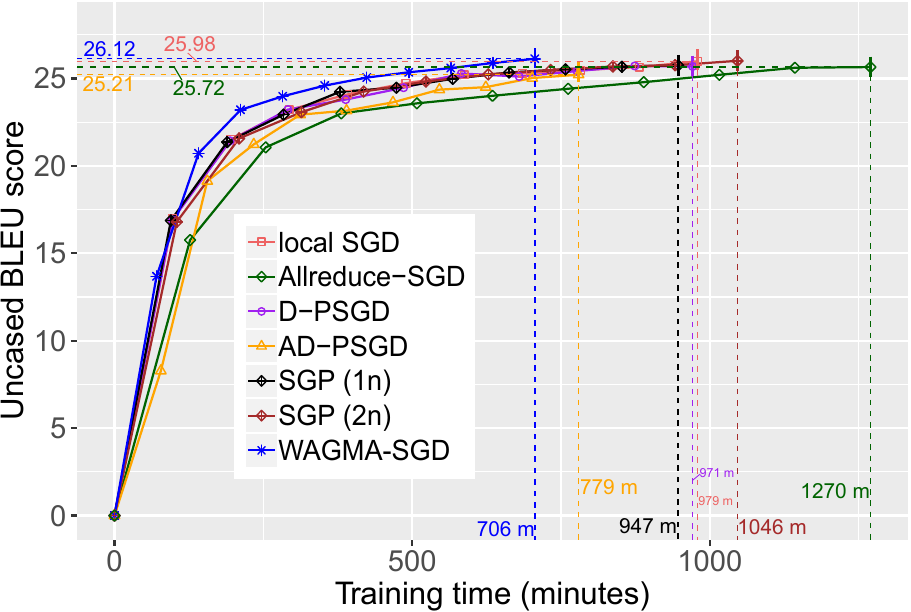}}
\caption{Uncased BLEU score for Transformer on WMT17 training for 10 epochs using 16 GPU nodes. 
Each point is at the boundary of one epoch.}
\label{transbleu}
\end{figure}

As for accuracy, Fig.~\ref{transbleu} presents the BiLingual Evaluation Understudy (BLEU) score (higher is better) on the test dataset after training for 10 epochs on 16 nodes. 
D-PSGD and AD-PSGD, have lower score (25.69 and 25.21, respectively) than the other SGD variants, likely because of
the slower model update propagation. SGP (1n, i.e., one communication neighbor) achieves higher score (i.e., 25.75) than D-PSGD and AD-PSGD. After increasing the number of communication neighbors to two in SGP (2n), the score increases to 26.01 (equivalent to local SGD, 25.98). However, this accuracy increase comes at the cost of reduced training speed compared with SGP (1n). Among all SGD variants, WAGMA-SGD not only achieves the highest score (i.e., 26.12, higher than 25.00 which is claimed in MLPerf~\cite{mattson2020mlperf}) but also uses the shortest training time (e.g., 1.48x, 1.10x and 1.39x speedup over SGP (2n), AD-PSGD, and local SGD, respectively).

We conduct additional experiments for WAGMA-SGD, similarly to Section \ref{sec:imagenet}: (1) Without using the dynamic grouping strategy (i.e., fixed groups), the score drops to 24.79; (2) By increasing the group size to 16 (i.e., global collective), accuracy does not improve and training throughput drops by a factor of 1.28x; and (3) By decreasing the group size to 2, the score drops to 24.53. These results reaffirm the conclusions from image classification.

\subsection{Deep Reinforcement Learning}
\label{eval:drl}

Due to the inherent characteristics of the problem, reinforcement learning poses a more challenging training process over supervised and semi-supervised learning. This also applies to the heterogeneity in workloads during training --- since the problems in question involve interacting with an environment in episodes (where failure terminates an episode early), a variety of episode lengths may occur within a single minibatch, in a way that cannot be anticipated or categorized in buckets.

We use the popular Proximal Policy Optimization (PPO) policy gradient  optimizer~\cite{wijmans2019decentralized} to train a model for robot navigation on a meta-dataset called Habitat~\cite{savva2019habitat}, which is composed of multiple heterogeneous environments. We first confirm previous claims~\cite{wijmans2019decentralized} and our own in Fig.~\ref{habitatimb}, where we collate the runtime distribution of 5,000 training iterations. The runtime is very widely distributed: from 1.7 seconds to 43.5, with a median below 2 seconds, which makes it an excellent use case for the load-rebalancing properties of WAGMA-SGD.

\begin{figure}[t]
\centerline{\includegraphics[width=.82\linewidth]{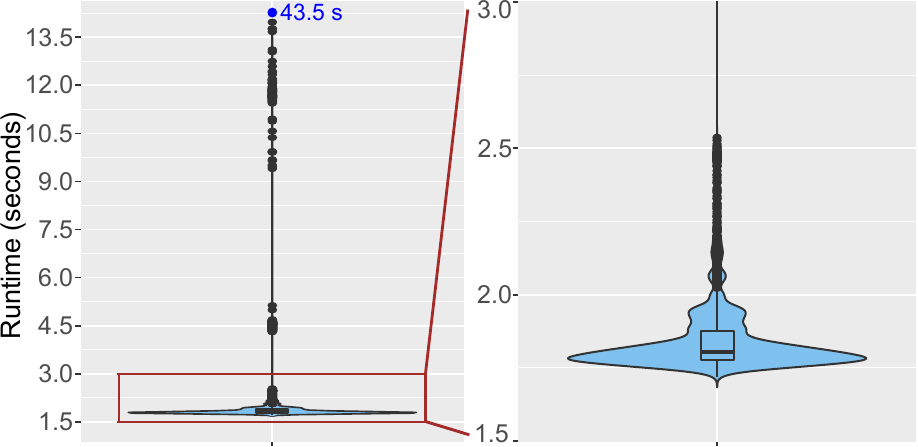}}
\caption{Runtime distribution of experience collecting on a P100 GPU in heterogeneous environments.}
\label{habitatimb}
\end{figure}

\begin{figure}[t]
\centerline{\includegraphics[width=.82\linewidth]{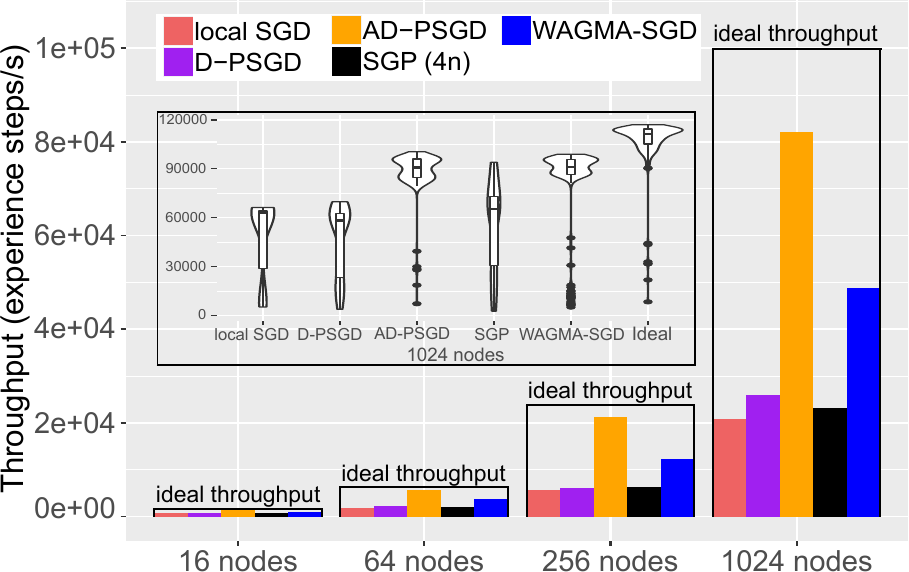}}
\caption{Throughput comparison between different parallel SGD algorithms for DDPPO on Habitat. (AD-PSGD achieves
the highest throughput but with a much lower score.)}
\label{ddppotp}
\end{figure}

\begin{figure}[t]
  \centering
    \centering\includegraphics[width=.82\linewidth]{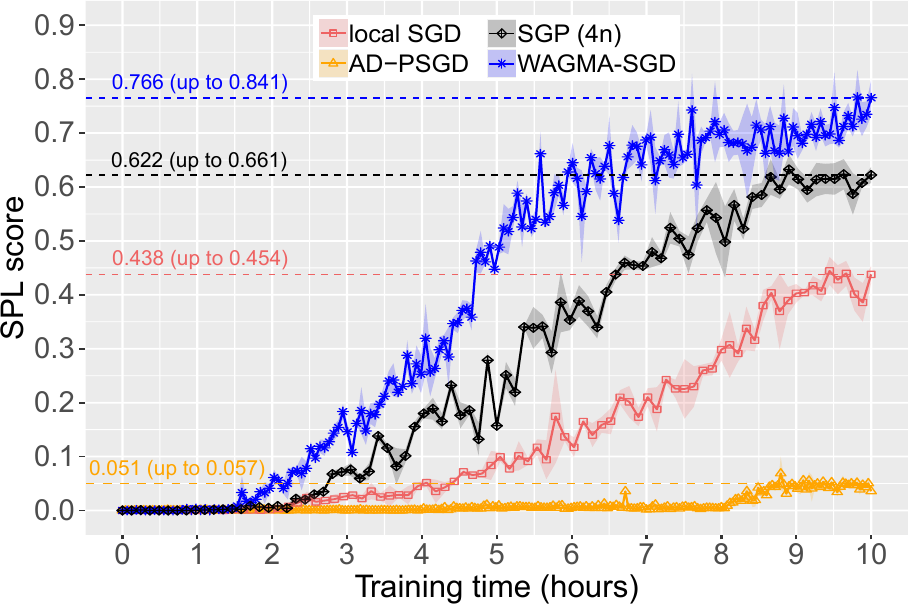}
  \caption{SPL score comparison for DDPPO on Habitat. Training on 64 GPU nodes for 10 hours. Each point is at the boundary of every 50 updates.} 
  \label{ddppospl}
\end{figure}

To evaluate the performance, we train a standard ResNet-LSTM model for navigation using PyTorch~\cite{paszke2019pytorch} as the basic platform. In particular, the network structure is composed of a ResNet-18 visual encoder, connected to a stack of two Long Short-Term Memory (LSTM)~\cite{hochreiter1997long} recurrent units functioning as the policy, containing 8,476,421 trainable parameters. The measured heterogeneous environments in Habitat, Gibson~\cite{xiazamirhe2018gibsonenv}
and Matterport3D~\cite{Matterport3D}, consist of interactive RGB-D datasets.
We set the experience steps to 128 and use the two vectorized (namely, optimized) environments, which means each GPU node needs to collect 256 experience steps for each training iteration. We set the WAGMA-SGD synchronization period to $\tau = 8$.

Fig.~\ref{ddppotp} presents the training throughput as the number of GPU nodes increases from 16 to 1,024, where the top of the rectangle indicates the ideal throughput without communication overhead.
Compared with local SGD, D-PSGD, and SGP (four communication neighbors) on 1,024 GPU nodes, WAGMA-SGD achieves 2.33x, 1.88x, and 2.10x speedup, respectively. The violin plot shows the throughput distribution. WAGMA-SGD only has lower throughput than AD-PSGD, since AD-PSGD is fully asynchronous. These results show that WAGMA-SGD excels in handling highly unbalanced workloads, achieving good scaling efficiency.

Complementary to training throughput, we study the Success weighted by Path Length (SPL) score (higher is better) after training the model for 10 hours on 64 GPUs. All models are tested four separate times to account for variability, and the average scores together with the standard deviation (shaded regions) over training time are plotted in Fig.~\ref{ddppospl}. As the figure shows, after training for 10 hours, AD-PSGD converges to a lower SPL score (on average 0.051), which does not rise over the course of 13,000 iterations. This is probably caused by the unbounded staleness and the overuse of the stale models in AD-PSGD. On the other hand, WAGMA-SGD achieves the highest score over time. SGP scores higher than local SGD, but not as well as WAGMA-SGD, whose quorum size is larger.

Beyond our experiments, the current state-of-the-art (SOTA) SPL score is 0.922~\cite{wijmans2019decentralized}, which is achieved after training on 2.5 billion experience steps. WAGMA-SGD consumes total 2.1 million experience steps after training for 10 hours using 64 GPUs, and achieves on average 83.1\% (up to 91.2\%) of the SOTA score. This indicates that WAGMA-SGD achieves generalization performance close to the SOTA using three orders of magnitude fewer iterations.

\section{Collectives in Context}
Collective operations have a core role in running applications efficiently at scale. As such, their optimization has led to several implementation and algorithmic variants~\cite{hoefler2007implementation,li2017cache,rabenseifner2004optimization,di2015exploiting}. \emph{Blocking} collectives~\cite{mpi-3.1} constitute the basic class of
operations. In this case, the collective call is allowed to return only when
the calling process has completed the actions needed to its participation in the operation. 
A first optimization to blocking collectives is to make them
\emph{non-blocking}~\cite{hoefler2007implementation}, enabling processes to return immediately and overlap other activities with the ongoing collective.

Some collectives require all processes to invoke the operation in order to complete, e.g., a reduction cannot be
computed before knowing all the values to reduce. Hence, their completion time can be influenced
by any skewing (imbalance) among the processes. \emph{Solo} collectives~\cite{di2015exploiting} remove this
synchronization overhead by making the collectives externally-triggerable: once a
process joins the collective, it sends an activation message to the other processes,
making them to start the collective independently from their state.
An issue of \emph{solo} collectives is that they make triggering the collective possible, even if there is only one process joining it. \emph{Majority} collectives~\cite{li2020taming} extend the \emph{solo} idea by
requiring that at least $P / 2$ processes join the collective before triggering
it. While these collectives are not guaranteed to be equivalent to their blocking
or non-blocking counterparts, they are suited for machine learning tasks, due
to the robustness of stochastic optimization to staleness.

Both \emph{solo} and \emph{majority} collectives aim to minimize the
synchronization overhead.  However, once activated, the collective is fully
performed, making the application pay the full operation cost plus the
activation overhead. \emph{Wait-avoiding group} collectives (this work) utilize the approach of solo collectives 
to achieve asynchrony, and reduce the overall operation cost further
by dynamically selecting subgroups of processes, each of which executing the
collective independently from the others. By applying the idea of wait-avoiding group allreduce to 
the decentralized parallel SGD based on model averaging, we propose WAGMA-SGD. 
Our work is orthogonal to communication compression or quantization methods, 
which reduce the communication volume by using less number of bits to represent the data (module update) to be transferred.

\section{Conclusion}

We show, both theoretically and in practice, that stochastic optimization via group model averaging --- asynchronously averaging the learned weights across subgroups of nodes --- functions well in large clusters. We prove that the algorithm converges under the standard conditions of SGD, and through a careful implementation of wait-avoiding collectives, we use the topology of the network to attain the best scaling results without losing accuracy. 
With the same number of steps, WAGMA-SGD achieves equivalent (or even higher) generalization scores as the standard synchronous SGD, while significantly reducing the training time (e.g., up to 1.48$\times$ speedup on Transformer) over the previous state-of-the-art, gossip-based SGD. Similar results are observed on the models from various sub-fields, where WAGMA-SGD consistently achieves the fastest time-to-solution. These results empirically prove that this approach successfully tackles the unbalanced training workloads in large scales, and brings asynchronous decentralized SGD to the regime of supercomputers.

\section*{Acknowledgment}
This project has received funding from the European Research Council (ERC) under the European Union's
Horizon 2020 programme (grant agreement DAPP, No. 678880; EPiGRAM-HS, No. 801039; and ERC Starting Grant ScaleML, No. 805223). 
T.B.N is supported by the Swiss National Science Foundation (Ambizione Project No. 185778).
N.D. is supported by the ETH Postdoctoral Fellowship. We also thank the Swiss National Supercomputing Center for providing the computing resources and technical support.

\bibliographystyle{IEEEtran}
\bibliography{mybib}

\newpage
\begin{IEEEbiography}[{\includegraphics[width=1in,height=1.25in,clip,keepaspectratio]{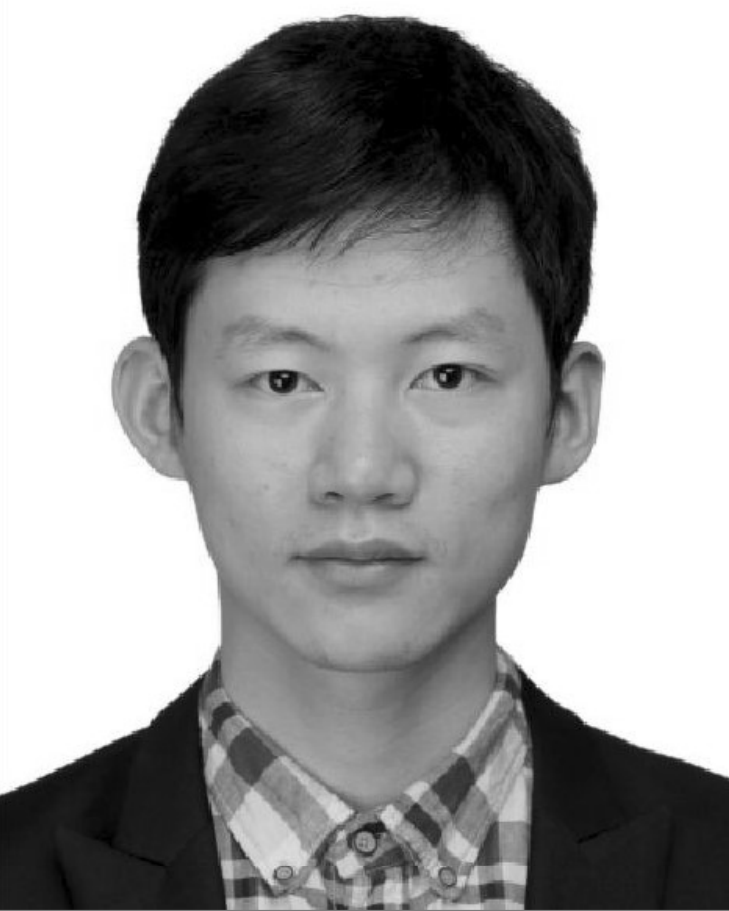}}]{Shigang Li} (Member, IEEE) is a postdoctoral researcher in Department of Computer Science, ETH Z\"{u}rich from August 2018 to now. He received the bachelor’s
degree in computer science and technology
and the PhD degree in computer architecture
from the University of Science and Technology
Beijing, China, in 2009 and 2014, respectively. He
was a joint PhD student at University of Illinois at
Urbana-Champaign from September 2011 to
September 2013 funded by CSC. He was an
Assistant Professor (from June 2014 to August
2018) with the State Key Lab of Computer Architecture,
Institute of Computing Technology, Chinese
Academy of Sciences. He
is a member of ACM. His research interests focus on parallel and distributed
computing, and parallel and distributed deep learning.

\end{IEEEbiography}

\begin{IEEEbiography}[{\includegraphics[width=1in,height=1.25in,clip,keepaspectratio]{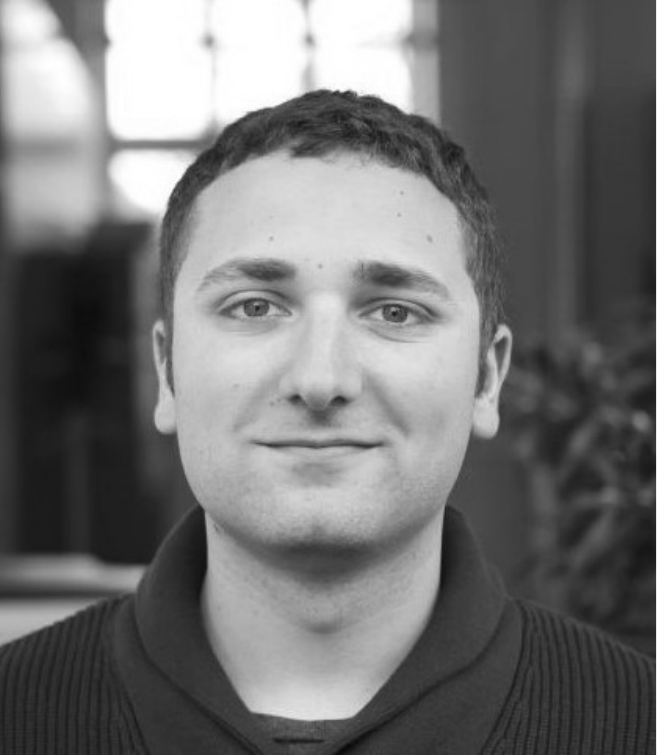}}]{Tal Ben-Nun} is a senior scientist in Department of Computer Science, ETH Z\"{u}rich. He received his PhD degree from the Hebrew University of Jerusalem in 2016, where he worked on programming models for distributed memory environments. His research interests include large-scale machine learning for scientific computing, machine comprehension of code via deep learning, and designing data-centric programming models for heterogeneous architectures.

\end{IEEEbiography}

\begin{IEEEbiography}[{\includegraphics[width=1in,height=1.25in,clip,keepaspectratio]{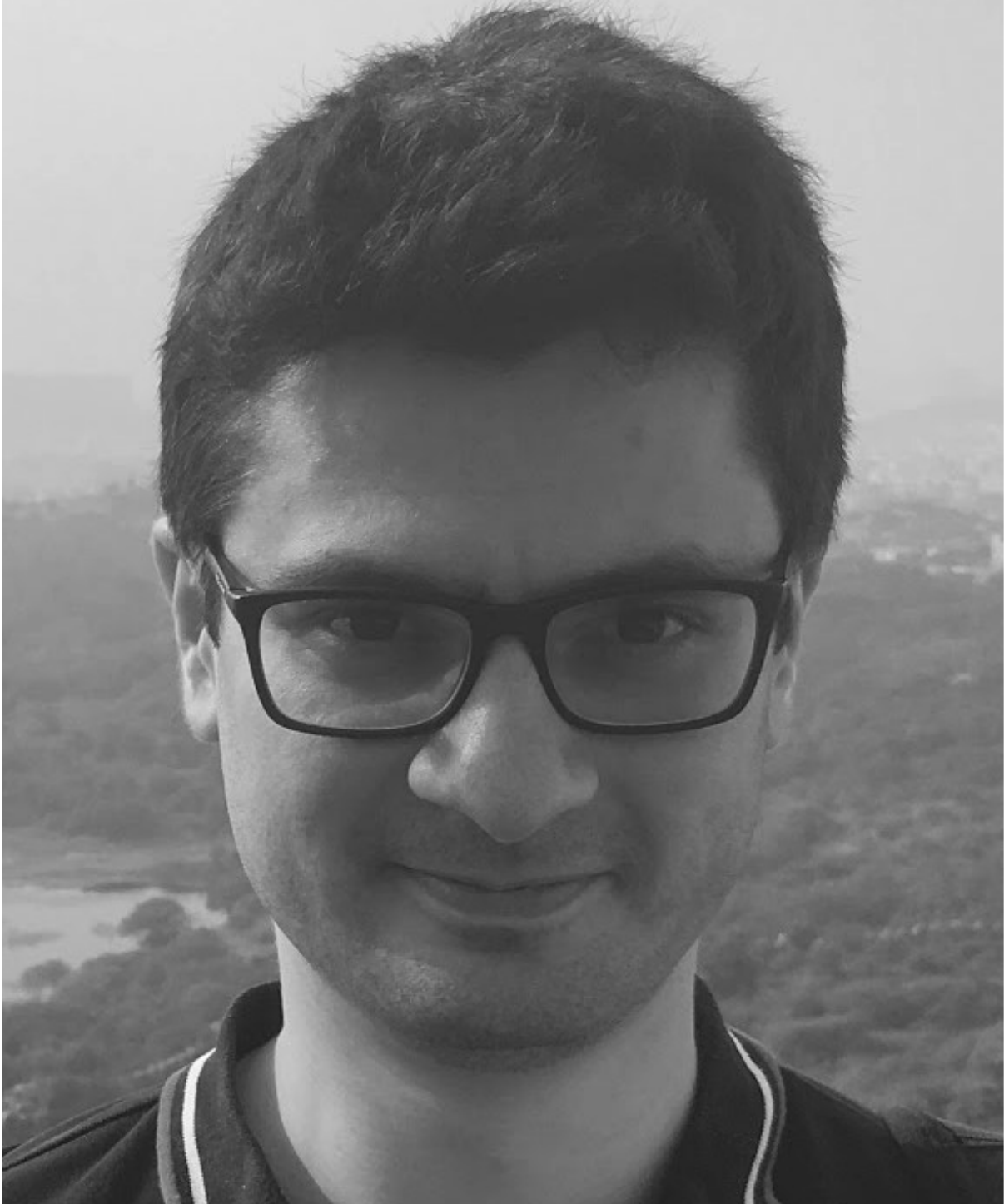}}]{Giorgi Nadiradze} is a PhD student at IST Austria. His research centers on fundamental problems in theoretical computer science, with an emphasis on dynamic load balancing processes and their applications. In particular, his work considers practical applications to concurrent data structures and distributed machine learning.

\end{IEEEbiography}

\begin{IEEEbiography}[{\includegraphics[width=1in,height=1.25in,clip,keepaspectratio]{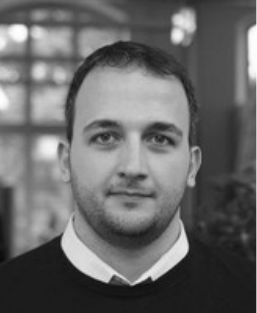}}]{Salvatore Di Girolamo}
is a PhD student at ETH Z\"{u}rich. He works on high-performance
networking with special focus on communication and computation network
offloading.

\end{IEEEbiography}

\begin{IEEEbiography}[{\includegraphics[width=1in,height=1.25in,clip,keepaspectratio]{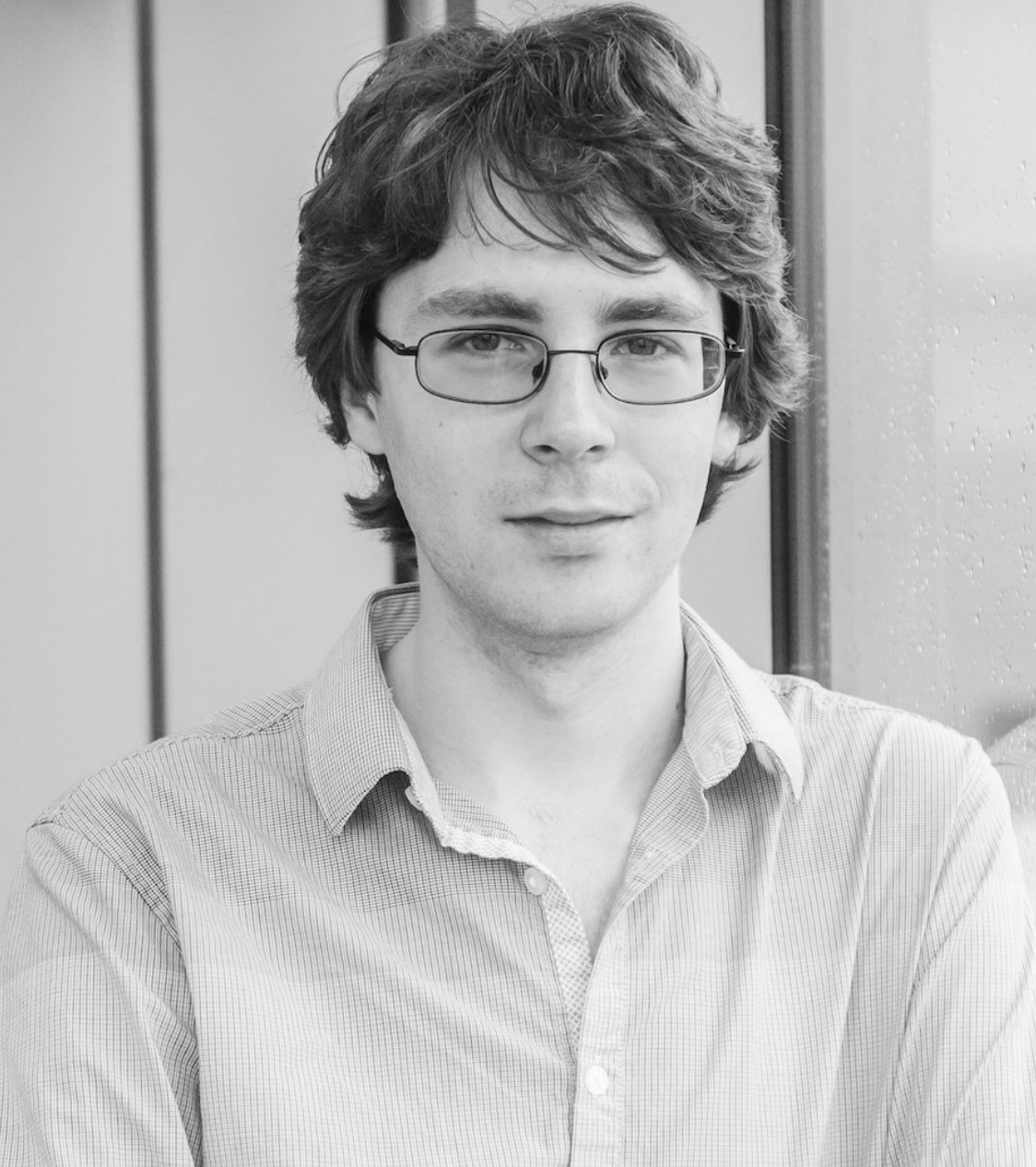}}]{Nikoli Dryden}
is an ETH Postdoctoral Fellow in the Scalable Parallel Computing Lab at ETH Z\"{u}rich. He received his PhD from the University of Illinois at Urbana-Champaign in 2019, where he worked on large-scale training of deep neural networks. His research interests include distributed machine learning, machine learning for computational science, parallel algorithms and runtimes, and communication and performance optimization.

\end{IEEEbiography}

\begin{IEEEbiography}[{\includegraphics[width=1in,height=1.25in,clip,keepaspectratio]{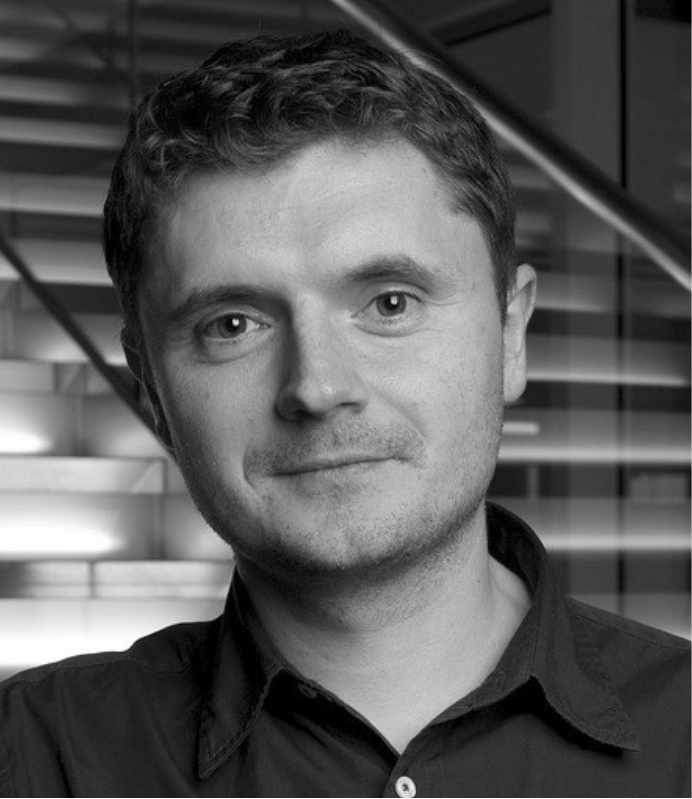}}]{Dan Alistarh} is an Assistant Professor at IST Austria, where he leads the Distributed Algorithms and Systems Group. His research focuses on concurrent data structures and distributed algorithms, and spans from algorithms and lower bounds, to practical implementations.
Before IST, he was a researcher at ETH Zurich and Microsoft Research, Cambridge, UK. Prior to that, he was a Postdoctoral Associate at MIT CSAIL.

\end{IEEEbiography}

\begin{IEEEbiography}[{\includegraphics[width=1in,height=1.25in,clip,keepaspectratio]{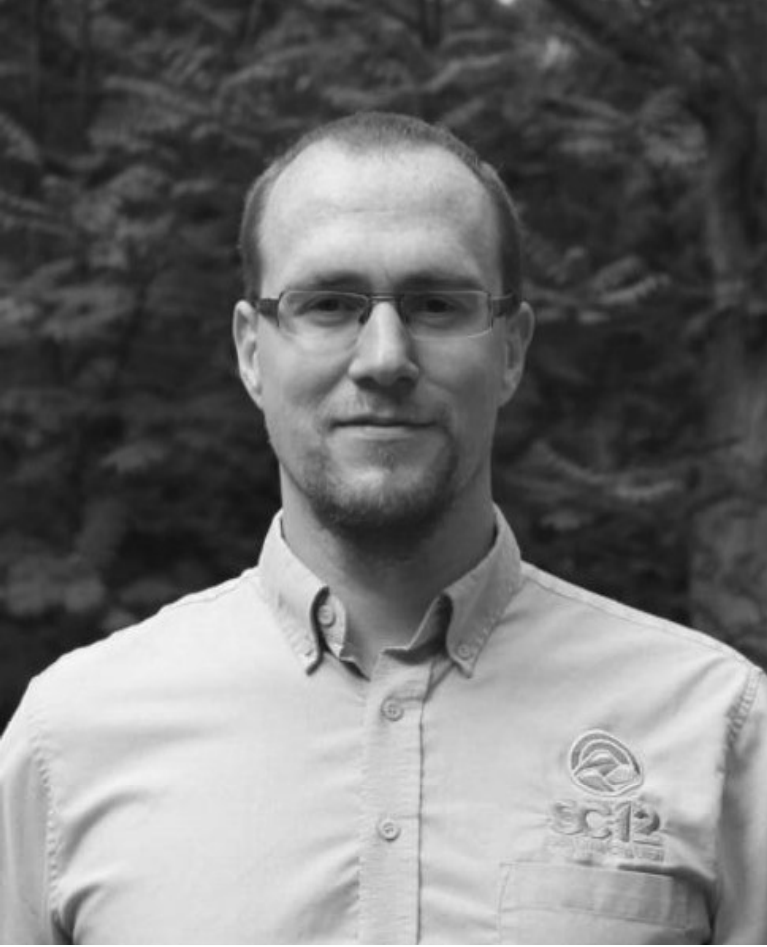}}]{Torsten Hoefler}
is a Full Professor of
Computer Science at ETH Z\"{u}rich, Switzerland. He directs the Scalable Parallel Computing Lab.
He is a key member of the MPI Forum where he chairs the Collective
Operations and Topologies working group.
His research interests revolve around the central
topic of Performance-Centric Software Development
and include scalable networks, parallel
programming techniques, and performance modeling.
\end{IEEEbiography}




\end{document}